\newtheorem{definition}{Definition}
\newtheorem{remark}{Remark}
\newenvironment{customthm}[1]{\innercustomthm}{\endinnercustomthm}
\newcommand{\F}{\mathbf{F}}
\newcommand{\Q}{\mathbf{Q}}
\newcommand{\U}{\mathbf{U}}
\newcommand{\LSF}{\mathcal{F}}
\newcommand{\RR}{\mathcal{V}}
\newcommand{\LSH}{\mathcal{H}}
\newcommand{\TLSF}{\mathbf{F}^{\otimes \tau}\!} 
\DeclareMathOperator*{\E}{\mathbb{E}}
\DeclareMathOperator{\poly}{poly}
\newcommand{\ip}[2]{\langle{#1},{#2}\rangle}
\newcommand{\va}{\mathbf{z}}
\newcommand{\x}{\mathbf{x}}
\newcommand{\y}{\mathbf{y}}
\newcommand{\z}{\mathbf{z}}
\newcommand{\real}{\mathbb{R}}
\newcommand{\sphere}[1]{\mathbb{S}^{#1}}
\newcommand{\norm}[1]{\left\lVert #1 \right\rVert}
\newcommand{\cube}[1]{\{0,1\}^{#1}}
\newcommand{\corrsub}[1]{\substack{(\x, \y) \\ {#1}\text{-correlated}}}
\newcommand*\mcap{\mathbin{\mathpalette\mcapinn\relax}}
\newcommand*\mcapinn[2]{\vcenter{\hbox{$\mathsurround=0pt\ifx\displaystyle#1\textstyle\else#1\fi\bigcap$}}}
\begin{document}
\pagenumbering{arabic} 
\title{\Large A Framework for Similarity Search with Space-Time Tradeoffs using Locality-Sensitive Filtering}
	
\author{Tobias Christiani\thanks{The research leading to these results has received funding 
	from the European Research Council under the European Union's Seventh Framework 
Programme (FP7/2007-2013) / ERC grant agreement no. [614331].} \\
\small \texttt{tobc@itu.dk}\\
\small IT University of Copenhagen}
\date{}

\maketitle
\pagenumbering{arabic} 
\begin{abstract}  {\small\baselineskip=9pt We present a framework for similarity search based on Locality-Sensitive Filtering~(LSF),
generalizing the Indyk-Motwani (STOC 1998) Locality-Sensitive Hashing~(LSH) framework to support space-time tradeoffs.    
Given a family of filters, defined as a distribution over pairs of subsets of space that satisfies certain locality-sensitivity properties, 
we can construct a dynamic data structure that solves the approximate near neighbor problem in $d$-dimensional space 
with query time $dn^{\rho_q + o(1)}$, update time $dn^{\rho_u + o(1)}$, 
and space usage $dn + n^{1 + \rho_u + o(1)}$ where $n$ denotes the number of points in the data structure.
The space-time tradeoff is tied to the tradeoff between query time and update time (insertions/deletions), 
controlled by the exponents $\rho_q, \rho_u$ that are determined by the filter family. \\   
Locality-sensitive filtering was introduced by Becker et al. (SODA 2016) together with a framework yielding a single, 
balanced, tradeoff between query time and space, further relying on the assumption of an efficient oracle for the filter evaluation algorithm.
We extend the LSF framework to support space-time tradeoffs 
and through a combination of existing techniques we remove the oracle assumption. \\
Laarhoven (arXiv 2015), building on Becker et al., introduced a family of filters with space-time tradeoffs 
for the high-dimensional unit sphere under inner product similarity and analyzed it for the important special case of random data.
We show that a small modification to the family of filters gives a simpler analysis that we use, 
together with our framework, to provide guarantees for worst-case data. 
Through an application of Bochner's~Theorem from harmonic analysis by Rahimi \& Recht (NIPS 2007), 
we are able to extend our solution on the unit sphere to $\real^d$ under the class of 
similarity measures corresponding to real-valued characteristic functions.
For the characteristic functions of $s$-stable distributions we obtain a solution 
to the $(r, cr)$-near neighbor problem in $\ell_s^d$-spaces with query and update exponents 
$\rho_q = \frac{c^s (1+\lambda)^2}{(c^s + \lambda)^2}$ and $\rho_u = \frac{c^s (1-\lambda)^2}{(c^s + \lambda)^2}$
where $\lambda \in [-1,1]$ is a tradeoff parameter. 
This result improves upon the space-time tradeoff of Kapralov (PODS 2015) and is shown to be optimal in the case of a balanced tradeoff, 
matching the LSH lower bound by O'Donnell et al.~\mbox{(ITCS 2011)} and a similar LSF lower bound proposed in this paper.
Finally, we show a lower bound for the space-time tradeoff on the unit sphere that matches Laarhoven's 
and our own upper bound in the case of random data.}
\end{abstract}
\section{Introduction}
Let $(X, D)$ denote a space over a set~$X$ equipped with a symmetric measure of dissimilarity~$D$ (a distance function in the case of metric spaces).
We consider the \emph{$(r, cr)$-near neighbor problem} first introduced by Minsky and Papert~\cite[p. 222]{minsky1969} in the~1960's.  
A solution to the $(r, cr)$-near neighbor problem for a set $P$ of $n$ points in $(X, D)$ 
takes the form of a data structure that supports the following operation: 
given a query point $\x \in X$, if there exists a data point $\y \in P$ such that $D(\x, \y) \leq r$ 
then report a data point $\y' \in P$ such that $D(\x, \y') \leq cr$.
In some spaces it turns out to be convenient to work with a measure of similarity rather than dissimilarity.
We use $S$ to denote a symmetric measure of similarity and define the \mbox{\emph{$(\alpha, \beta)$-similarity problem}}
to be the $(-\alpha, -\beta)$-near neighbor problem in~$(X, -S)$.

A solution to the $(r, cr)$-near neighbor problem can be viewed as a fundamental building block that yields solutions 
to many other similarity search problems such as the $c$-approximate \emph{nearest} neighbor problem \cite{indyk2004, har-peled2012}.
In particular, the $(r, cr)$-near neighbor problem is well-studied in $\ell_s^d$-spaces  
where the data points lie in~$\real^d$ and distances are measured by 
\mbox{$D(\x, \y) = \norm{\x - \y}_s = (\sum_{i=1}^{d}|x_i - y_i|^s)^{1/s}$}.
Notable spaces include the Euclidean space~$(\real^d, \norm{\cdot}_2)$, Hamming space~$(\{0,1\}^{d}, \norm{\cdot}_1)$, 
and the $d$-dimensional unit sphere~$\sphere{d} = \{ \x \in \real^d \mid \norm{\x}_2 = 1 \}$
under inner product similarity \mbox{$S(\x, \y) = \ip{\x}{\y} = \sum_{i=1}^{d}x_iy_i$}.

\paragraph{Curse of dimensionality}
All known solutions to the $(r, cr)$-near neighbor problem for $c = 1$ (the exact near neighbor problem) 
either suffer from a space usage that is exponential in $d$ or a query time that is linear in $n$ \cite{har-peled2012}.
This phenomenon is known as the ``curse of dimensionality'' and has been observed both in theory and practice.
For example, Alman and Williams \cite{alman2015} recently showed that the existence of an algorithm for determining 
whether a set of $n$ points in $d$-dimensional Hamming space contains a pair of points that are exact near neighbors 
with a running time strongly subquadratic in $n$ would refute the Strong Exponential Time Hypothesis (SETH) \cite{williams2004}.
This result holds even when $d$ is rather small, $d = O(\log n)$.
From a practical point of view, Weber et al. \cite{weber1998} showed that the performance of many of the tree-based approaches to similarity search 
from the field of computational geometry \cite{berg2008} degrades rapidly to a linear scan as the dimensionality increases.

\paragraph{Approximation to the rescue}
If we allow an approximation factor of $c > 1$ then there exist solutions to the $(r, cr)$-near neighbor problem 
with query time that is strongly sublinear in~$n$ and space polynomial in~$n$
where both the space and time complexity of the solution depends only polynomially on $d$.
Techniques for overcoming the curse of dimensionality through approximation were discovered 
independently by Kushilevitz et al. \cite{kushilevitz2000} and Indyk and Motwani \cite{indyk1998}.
The latter, classical work by Indyk and Motwani \cite{indyk1998, har-peled2012} introduced a general framework 
for solving the $(r, cr)$-near neighbor problem known as Locality-Sensitive Hashing (LSH). 
The introduction of the LSH framework has inspired an extensive literature (see e.g. \cite{andoni2008, wang2014} for surveys) 
that represents the state of the art in terms of solutions to the $(r, cr)$-near neighbor problem in high-dimensional spaces \cite{indyk1998, charikar2002, datar2004, panigrahy2006, andoni2006, andoni2008, andoni2009, andoni2015, kapralov2015, andoni2015data, becker2016, laarhoven2015}.

\paragraph{Hashing and filtering frameworks}
The LSH framework and the more recent LSF framework 
introduced by Becker et al. \cite{becker2016} produce data structures that solve the $(r, cr)$-near neighbor problem 
with query and update time $dn^{\rho + o(1)}$ and space usage $dn + n^{1 + \rho + o(1)}$. 
The LSH (LSF) framework takes as input a distribution over partitions (subsets) of space with the locality-sensitivity property 
that close points are more likely to be contained in the same part (subset) of a randomly sampled element from the distribution.
The frameworks proceed by constructing a data structure that associates each point in space with a number of memory locations
or ``buckets'' where data points are stored. 
During a query operation the buckets associated with the query point are searched by computing the distance to every data point in the bucket, 
returning the first suitable candidate.
The set of memory locations associated with a particular point is independent of whether an update operation or a query operation is being performed.
This symmetry between the query and update algorithm results in solutions to the near neighbor problem with a balanced space-time tradeoff.
The exponent~$\rho$ is determined by the locality-sensitivity properties of the family of partitions/hash functions (LSH) or subsets/filters (LSF) 
and is typically upper bounded by an expression that depends only on the aproximation factor $c$.
For example, Indyk and Motwani \cite{indyk1998} gave a simple locality-sensitive family of hash functions for Hamming space 
with an exponent of~$\rho \leq 1/c$. 
This exponent was later shown to be optimal by O'Donnell et al. \cite{odonnell2014} who gave a lower bound of $\rho \geq 1/c - o_{d}(1)$ 
in the setting where $r$ and $cr$ are small compared to $d$.
The advantage of having a general framework for similarity search lies in the reduction of the $(r, cr)$-near neighbor problem 
to the, often simpler and easier to analyze, problem of finding a locality-sensitive family of hash functions or filters for the space of interest.    

\paragraph{Space-time tradeoffs}
Space-time tradeoffs for solutions to the $(r, cr)$-near neighbor problem is an active line of research that can be 
motivated by practical applications where it is desirable to choose the tradeoff between query time and update time (space usage) 
that is best suited for the application and memory hierarchy at hand~\cite{panigrahy2006, lv2007, andoni2009, kapralov2015, laarhoven2015}. 
Existing solutions typically have query time $dn^{\rho_q + o(1)}$, update time (insertions/deletions) $dn^{\rho_u + o(1)}$, and use space $dn + n^{1 + \rho_u + o(1)}$ 
where the query and update exponents $\rho_q, \rho_u$ that control the space-time tradeoff depend on the approximation factor $c$ and on a tradeoff parameter $\lambda \in [-1,1]$.
This paper combines a number of existing techniques \cite{becker2016, laarhoven2015, dubiner2010} to provide a general framework for similarity search with space-time tradeoffs. 
The framework is used to show improved upper bounds on the space-time tradeoff in the well-studied setting of $\ell_s$-spaces and the unit sphere under inner product similarity.
Finally, we show a new lower bound on the space-time tradeoff for the unit sphere that matches an upper bound for random data on the unit sphere by Laarhoven \cite{laarhoven2015}.
We proceed by stating our contribution and briefly surveying the relevant literature in terms of frameworks, 
upper bounds, and lower bounds as well as some recent developments.
See table Table \ref{tab:results} for an overview.

\begin{table*}[t]
	\centering
	\renewcommand\arraystretch{2}
	\caption{Overview of data-independent locality-sensitive hashing (LSH) and filtering (LSF) results}
	\footnotesize
    \begin{tabular}{|c|c|c|c|c|} \hline
		\textbf{Reference} & \textbf{Setting} &  $\rho_q$ & $\rho_u$ \\ \hline\hline
		LSH \cite{indyk1998, har-peled2012}, LSF \cite{becker2016} &
		\multirow{2}{*}{$(X, D)$ or $(X, S)$}  
		& \multicolumn{2}{c|}{$\dfrac{\log (1/p)}{\log (1/q)}$} \\[3pt] \cline{1-1} \cline{3-4} 
		\textbf{Theorem \ref{thm:lsfvanilla}} & 
		& $\dfrac{\log (p_q /p_1)}{\log (p_q / p_2)}$ & $\dfrac{\log (p_u /p_1)}{\log (p_q / p_2)}$ \\[4pt] 
	\hline\hline 
	
	Cross-polytope LSH \cite{andoni2015} & $(\alpha, \beta)$-sim.\@ in $(\sphere{d},\ip{\cdot}{\cdot})$ & 
	\multicolumn{2}{c|}{
		$\left. \dfrac{1-\alpha}{1+\alpha}  \middle/  \dfrac{1-\beta}{1+\beta} \right.$
	} \\[3pt]  \hline
	
	Spherical cap LSF \cite{laarhoven2015} & $(\alpha, o_{d}(1))$-sim.\@ in $(\sphere{d},\ip{\cdot}{\cdot})$ & 
	$\dfrac{(1-\alpha^{1+\lambda})^2}{1-\alpha^2}$ &
	$\dfrac{(\alpha^{\lambda} -\alpha)^2}{1-\alpha^2}$ 
	\\[2pt]  \hline
	
	\textbf{Theorem \ref{thm:spherevanilla}} & $(\alpha, \beta)$-sim.\@ in $(\sphere{d},\ip{\cdot}{\cdot})$ & 
	$\left. \dfrac{(1-\alpha^{1+\lambda})^2}{1-\alpha^2}  \middle/  \dfrac{(1- \alpha^{\lambda} \beta)^2}{1-\beta^2} \right.$ &
	$\left. \dfrac{(\alpha^{\lambda} - \alpha)^2}{1-\alpha^2}  \middle/  \dfrac{(1-\alpha^{\lambda} \beta)^2}{1-\beta^2} \right.$ 
	\\[3pt]  \hline \hline

	Ball-carving LSH \cite{andoni2006} & \multirow{3}{*}{ $(r, cr)$-nn.\@ in $\ell_2^d$}  
	& \multicolumn{2}{c|}{$1/c^2$}   \\  \cline{1-1} \cline{3-4} 
	
	Ball-search LSH* \cite{kapralov2015} && 
	\scriptsize
	$\dfrac{c^2 (1 + \lambda)^2}{(c^2 + \lambda)^2 - c^2 (1+\lambda^2)/2 - \lambda^2}$ & 
	\scriptsize
	$\dfrac{c^2 (1 - \lambda)^2}{(c^2 + \lambda)^2 - c^2 (1+\lambda^2)/2 - \lambda^2}$ \\[2pt] \cline{1-1} \cline{3-4} 
	\footnotesize
	
	\textbf{Theorem \ref{thm:lsvanilla}}  && 
	$\dfrac{c^2 (1 + \lambda)^2}{(c^2 + \lambda)^2}$ & 
	$\dfrac{c^2 (1 - \lambda)^2}{(c^2 + \lambda)^2}$ \\[3pt] 
	\hline \hline

	Lower bound \cite{odonnell2014} & LSH in $\ell_2^d$ &
	\multicolumn{2}{c|}{$\geq 1/c^2$} \\  \hline
	\textbf{Theorem \ref{thm:lsflshvanilla}} & LSF in $\ell_2^d$ &
	\multicolumn{2}{c|}{$\geq 1/c^2$} \\  \hline
	Lower bound \cite{motwani2007, andoni2015lower} & LSH in $(\sphere{d},\ip{\cdot}{\cdot})$ &
	\multicolumn{2}{c|}{
		$\geq \dfrac{1-\alpha}{1+\alpha}$
	} \\[2pt]  \hline
	\textbf{Theorem \ref{thm:lowertradeoffvanilla}}, \cite{andoni2016} & LSF in $(\sphere{d},\ip{\cdot}{\cdot})$ &
	$\geq \dfrac{(1-\alpha^{1+\lambda})^2}{1-\alpha^2}$ &
	$\geq \dfrac{(\alpha^{\lambda} -\alpha)^2}{1-\alpha^2}$ 
	\\[2pt]  \hline
    \end{tabular}
\begin{minipage}{0.89\textwidth} 
\scriptsize
\vspace{0.6em}
\textsc{Table notes:} Space-time tradeoffs for dynamic randomized solutions to similarity search problems in the LSH and LSF frameworks 
with query time $dn^{\rho_q + o(1)}$, update time $dn^{\rho_u + o(1)} + dn^{o(1)}$ and space usage $dn + n^{1 + \rho_u + o(1)}$.
Lower bounds are for the exponents $\rho_q, \rho_u$ within their respective frameworks.
Here $\varepsilon > 0$ denotes an arbitrary constant and $\lambda \in [-1,1]$ controls the space-time tradeoff. 
We have hidden $o_{n}(1)$ terms in the upper bounds and $o_d(1)$ terms in the lower bounds.
\newline *Assumes $c^2 \geq (1+\lambda)^2 / 2 + \lambda + \varepsilon$.
\end{minipage}
\label{tab:results}
\end{table*}
\subsection{Contribution}
Before stating our results we give a definition of locality-sensitive filtering that supports asymmetry in the framework query and update algorithm, 
yielding space-time tradeoffs.
\begin{definition} \label{def:lsf}
	Let $(X, D)$ be a space and let $\LSF$ be a probability distribution over \mbox{$\{(Q, U) \mid Q \subseteq X, U \subseteq X \}$}.
	We say that $\LSF$ is $(r, cr, p_1, p_2, p_q, p_u)$-sensitive if for all points $\x, \y \in X$ 
	and $(Q, U)$ sampled randomly from $\LSF$ the following holds:
	\begin{itemize}
		\item[--] If $D(\x, \y) \leq r$ then $\Pr[\x \in Q, \y \in U] \geq p_1$.
		\item[--] If $D(\x, \y) > cr$ then $\Pr[\x \in Q, \y \in U] \leq p_2$.
		\item[--] $\Pr[\x \in Q] \leq p_q$ and $\Pr[\x \in U] \leq p_u$.
	\end{itemize}
	We refer to $(Q, U)$ as a filter and to $Q$ as the query filter and $U$ as the update filter.

\end{definition}

Our main contribution is a general framework for similarity search with space-time tradeoffs 
that takes as input a locality-sensitive family of filters.
\begin{theorem}\label{thm:lsfvanilla}
	Suppose we have access to a family of filters that is $(r, cr, p_1, p_2, p_q, p_u)$-sensitive.
	Then we can construct a fully dynamic data structure 
	that solves the $(r, cr)$-near neighbor problem with 
	query time $dn^{\rho_q + o(1)}$, update time $dn^{\rho_u + o(1)}$, and space usage $dn + n^{1 + \rho_u + o(1)}$ where
	$\rho_q = \frac{\log(p_{q} / p_{1}) }{\log(p_{q} / p_{2})}$ and $\rho_u = \frac{\log(p_{u} / p_{1})}{\log(p_{q} / p_{2}) }$. 
\end{theorem}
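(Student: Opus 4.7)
The plan is to follow the classical LSH/LSF amplification recipe, now asymmetric so that the query and update sides are treated separately.

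First, I would form a composite family by drawing $\tau$ independent filters $(Q^{(j)}, U^{(j)})$ from $\LSF$ and letting the composite filter be $(Q^{(1)} \cap \cdots \cap Q^{(\tau)},\, U^{(1)} \cap \cdots \cap U^{(\tau)})$. Independence immediately shows this family is $(r, cr, p_1^\tau, p_2^\tau, p_q^\tau, p_u^\tau)$-sensitive. Sample $m$ composite filters $(Q_i, U_i)$ and maintain, for each, a bucket $B_i$ holding the data points $\y$ with $\y \in U_i$. An insertion of $\y$ pushes it into every $B_i$ with $\y \in U_i$; a query for $\x$ scans every $B_i$ with $\x \in Q_i$, computes the distance to each bucketed point, and returns any at distance at most $cr$.

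Next, I would pick parameters to balance the two dominant costs. Choose $\tau = \lceil \log n / \log(p_q/p_2) \rceil$, which makes $n p_2^\tau \leq p_q^\tau$, so the expected number of spurious far-point collisions summed over all $n$ data points is at most the expected number of filters activated by $\x$ alone. Choose $m = \Theta(p_1^{-\tau} \log n)$ so that, by a Chernoff bound over the independent filter samples, a near pair collides in at least one filter with high probability. Substituting yields expected query time $O(d \cdot m (p_q^\tau + n p_2^\tau)) = O(d \cdot (p_q/p_1)^\tau \log n) = d n^{\rho_q + o(1)}$, expected update time $O(d \cdot m p_u^\tau) = d n^{\rho_u + o(1)}$, and total space $dn + n^{1 + \rho_u + o(1)}$ after multiplying the per-point storage $m p_u^\tau$ by $n$.

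The central obstacle is efficient filter evaluation: naively testing all $m$ composite filters against $\x$ costs $\Omega(m \tau d)$, which can exceed $n^{\rho_q}$ since $m \geq p_1^{-\tau}$ is generally a larger power of $n$. To remove the oracle assumption used by Becker et al., I would structure the $m$ filters as a tensor product rather than $m$ independent samples, partitioning the $\tau$ coordinates into $u$ blocks of length $\tau/u$, drawing $M = m^{1/u}$ short sequences per block, and defining the $m = M^u$ composite filters as the $u$-fold Cartesian products of the short sequences. Then enumerating the active composite filters for $\x$ reduces to evaluating the $uM$ short sequences in time $O(u M d)$, followed by a Cartesian-product walk whose cost is exactly the number of active composite filters, already charged to the query time. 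Taking $u = \omega(1)$ slowly and $M = n^{o(1)}$ keeps the overhead $uMd$ inside the $n^{o(1)}$ slack; pairwise independence across composite filters survives the product, which is enough for a Chebyshev-based correctness argument (boosted to high probability by repeating the whole construction $O(\log n)$ times), replacing the Chernoff bound used in the independent-sampling analysis.
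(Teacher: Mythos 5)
Your overall plan --- amplify the family by $\tau$-fold intersection, pick $\tau$ to balance false positives against active filters, then make filter enumeration efficient via a compositional structure, and finally charge the enumeration cost to the already-paid query/update work --- is the right outline and matches the paper's strategy. The concrete realization of the compositional structure and its correctness argument is where your proposal diverges from the paper and where it contains an error.

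The paper does not use a $u$-fold Cartesian grid. It uses \emph{tensoring}: from a pool of $m_1$ (already powered) filters it forms all $\binom{m_1}{\tau}$ intersections of $\tau$-subsets, with $m_1 = \lceil \tau / p_1^{\kappa_1} \rceil$. Correctness is then a one-line median-of-binomial argument (Lemma~\ref{lem:tensoring}): the expected number of pool filters containing a near pair is at least $\tau$, so with probability at least $1/2$ at least $\tau$ pool filters contain the pair, and hence at least one $\tau$-subset does. No Chernoff, no Chebyshev, and no independence assumption across the composite filters is needed. The paper also stacks a separate powering stage (the family $\F_2$) on top of the tensored family precisely to absorb the integer-rounding slack in $\tau$ and $\kappa_1$; without this second stage one cannot hit the target amplification level for arbitrary $p_1, p_2, p_q, p_u$. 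Finally, the paper invokes Overmars--van Leeuwen for full dynamization, a step your sketch leaves implicit.

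The concrete gap in your proposal is the claim that ``pairwise independence across composite filters survives the product.'' It does not. Two composite filters in your grid that agree on a block are positively correlated: if they share $k$ of the $u$ blocks, the probability that both contain a given near pair is $p_1^{2\tau - k\tau/u}$, strictly larger than $p_1^{2\tau}$ for $k \geq 1$. A Chebyshev argument can still in principle be salvaged by bounding the second moment directly: the excess variance is controlled by $\sum_{k\geq 1}\binom{u}{k}\bigl(M\,p_1^{\tau/u}\bigr)^{-k}$, which after substituting $M = (C\,p_1^{-\tau}\log n)^{1/u}$ reduces to requiring roughly $u\,(\log n)^{-1/u} \to 0$; this forces $u$ into a narrow window (one needs $u = \omega(1)$ so that $M = n^{o(1)}$, yet also $u\log u = o(\log\log n)$ so that the correlation terms vanish), and you would have to carry out this calculation rather than invoke independence. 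As stated, the correctness argument is broken; the paper's median-of-binomial route avoids the difficulty entirely.
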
  

We give a worst-case analysis of a slightly modified version of Laarhoven's \cite{laarhoven2015} filter family for the unit sphere 
and plug it into our framework to obtain the following theorem.
\begin{theorem}\label{thm:spherevanilla}
	For every choice of $0 \leq \beta < \alpha < 1$ and $\lambda \in [-1,1]$ 
	there exists a solution to the $(\alpha, \beta)$-similarity problem in $(\sphere{d}, \ip{\cdot}{\cdot})$ 
	that satisfies the guarantees from Theorem \ref{thm:lsfvanilla} with exponents  
	$\rho_q = \left. \frac{(1-\alpha^{1+\lambda})^2}{1-\alpha^2}  \middle/  \frac{(1- \alpha^{\lambda} \beta)^2}{1-\beta^2} \right.$ and  
	$\rho_u = \left. \frac{(\alpha^{\lambda} - \alpha)^2}{1-\alpha^2}  \middle/  \frac{(1-\alpha^{\lambda} \beta)^2}{1-\beta^2} \right.$. 
\end{theorem}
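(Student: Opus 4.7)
The plan is to build a family of locality-sensitive filters on $\sphere{d}$ and then feed it into Theorem~\ref{thm:lsfvanilla}. Inspired by Laarhoven's spherical caps, I would draw a Gaussian vector $\z \sim \N(0, I_d)$ and form the two caps
\[
    Q = \{\x \in \sphere{d} : \ip{\x}{\z} \geq t_q\}, \quad U = \{\x \in \sphere{d} : \ip{\x}{\z} \geq t_u\},
\]
with thresholds $t_q, t_u > 0$ to be tuned. Using two distinct thresholds rather than one is the ``small modification'' that opens up a space--time tradeoff; keeping a common $\z$ for both filters preserves the correlation needed to keep $p_1, p_2$ large.

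Next I would read off the six quantities of Definition~\ref{def:lsf} from one- and two-dimensional Gaussian tail estimates. Since $\ip{\x}{\z} \sim \N(0,1)$ for any $\x \in \sphere{d}$, the Mills-ratio bound gives $-\log p_q \sim t_q^2/2$ and $-\log p_u \sim t_u^2/2$. For unit vectors with $\ip{\x}{\y} = \sigma$ the pair $(\ip{\x}{\z}, \ip{\y}{\z})$ is bivariate Gaussian with correlation $\sigma$, and a Laplace-type estimate centered at the corner $(t_q, t_u)$ yields
\[
    -\log \Pr[\x \in Q,\, \y \in U] \sim \frac{t_q^2 - 2\sigma t_q t_u + t_u^2}{2(1-\sigma^2)},
\]
and specializing $\sigma \in \{\alpha, \beta\}$ produces $-\log p_1$ and $-\log p_2$.

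I would then set $t_q = \alpha^{\lambda}\, t$ and $t_u = t$ with $t = t(n) \to \infty$ chosen slowly enough that $1/p_1 = n^{o(1)}$. With this ratio the three quadratic forms collapse into perfect squares,
\begin{align*}
    \log(p_q/p_1) &\sim \tfrac{t^2}{2}\cdot \tfrac{(1-\alpha^{1+\lambda})^2}{1-\alpha^2},\\
    \log(p_u/p_1) &\sim \tfrac{t^2}{2}\cdot \tfrac{(\alpha^{\lambda}-\alpha)^2}{1-\alpha^2},\\
    \log(p_q/p_2) &\sim \tfrac{t^2}{2}\cdot \tfrac{(1-\alpha^{\lambda}\beta)^2}{1-\beta^2},
\end{align*}
so that the ratios defining $\rho_q$ and $\rho_u$ in Theorem~\ref{thm:lsfvanilla} cancel the leading $t^2/2$ factors and reproduce exactly the claimed exponents.

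The main obstacle is the bookkeeping of lower-order terms. The polynomial prefactor in each Gaussian tail contributes an $O(\log t)$ additive error to $-\log p_i$; for these to vanish in the ratios defining $\rho_q, \rho_u$ the parameter $t$ must tend to infinity, but slowly enough that $1/p_1$ remains $n^{o(1)}$ and the framework's sub-polynomial overhead is available. I also have to verify that the corner $(t_q, t_u)$ really is the dominant point of the two-dimensional Gaussian tail, i.e.\ that $t_q > \sigma t_u$ and $t_u > \sigma t_q$ for $\sigma \in \{\alpha, \beta\}$ in the relevant parameter range; this holds for $\lambda \in (-1,1)$ and $\beta$ bounded away from $1$, with the limit cases $\lambda = \pm 1$ handled by a continuity/perturbation argument.
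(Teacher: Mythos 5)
Your proposal is correct and matches the paper's approach essentially exactly: the paper (Lemma~\ref{lem:gaussianlsf} together with Appendix~\ref{app:gaussian}) uses the same Gaussian filter family with a common projection vector $\z$ and asymmetric thresholds $\alpha^\lambda t$ and $t$, derives the same bivariate-Gaussian tail exponent $\frac{t_q^2 - 2\sigma t_q t_u + t_u^2}{2(1-\sigma^2)}$ (the paper writes it as $(1 + \frac{(\alpha^\lambda - \sigma)^2}{1-\sigma^2})t^2/2$, but with $t_q=\alpha^\lambda t$, $t_u=t$ these are identical), and lets $t = t(n)\to\infty$ slowly (the paper takes $t^2/2 \propto (\ln n)^\varepsilon$) before applying Theorem~\ref{thm:lsfvanilla}. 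The only cosmetic difference is how the tail estimates are justified — you invoke a Laplace-type argument at the corner, while the paper uses Lu \& Li's convexity bound for the upper tail and a conditioning argument plus the univariate Szarek--Werner bound for the lower tail — but these give the same asymptotics.
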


We show how an elegant and powerful application of Bochner's Theorem \cite{rudin1990} by Rahimi and Recht \cite{rahimi2007} 
allows us to extend the solution on the unit sphere to a large class of similarity measures, yielding as a special case solutions for $\ell_s$-space.
\begin{theorem} \label{thm:lsvanilla}
	For every choice of $c \geq 1$, $s \in (0, 2]$, and $\lambda \in [-1,1]$ 
	there exists a solution to the~$(r, cr)$-near neighbor problem in $\ell_s^d$ 
	that satisfies the guarantees from Theorem \ref{thm:lsfvanilla} with exponents
	$\rho_q = \frac{c^s (1 + \lambda)^2}{(c^s + \lambda)^2}$ and $\rho_u = \frac{c^s (1 - \lambda)^2}{(c^s + \lambda)^2}$.		
\end{theorem}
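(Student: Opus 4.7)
The plan is to reduce the $(r,cr)$-near neighbor problem in $\ell_s^d$ to an $(\alpha,\beta)$-similarity problem on the unit sphere via a randomized embedding, apply Theorem~\ref{thm:spherevanilla}, and then take a small-scale limit of the resulting exponents.

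For the embedding I would invoke Bochner's theorem in the form used by Rahimi and Recht: the characteristic function $\varphi(t)=e^{-|t|^s}$ of a symmetric $s$-stable distribution is positive definite for $s\in(0,2]$, so sampling iid $w_1,\dots,w_D\in\real^d$ with $s$-stable coordinates and iid phases $b_1,\dots,b_D$ uniform on $[0,2\pi)$ and setting
\[
\phi(\x)=\sqrt{2/D}\,\bigl(\cos(\ip{w_i}{\x}+b_i)\bigr)_{i=1}^{D}
\]
gives $\E[\ip{\phi(\x)}{\phi(\y)}]=e^{-\norm{\x-\y}_s^s}$ and $\E\,\norm{\phi(\x)}_2^2=1$. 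Because every coordinate of $\phi(\x)$ lies in $[-\sqrt{2/D},\sqrt{2/D}]$, Hoeffding applied to the bounded \emph{features} (not to the heavy-tailed $w_i$) plus a union bound shows that for $D$ slightly super-logarithmic in $n$ all pairwise inner products and norms among the $n+1$ points of interest concentrate to within $o(1)$ of their expectations whp. Normalising $\tilde\phi(\x)=\phi(\x)/\norm{\phi(\x)}_2$ then puts the points on $\sphere{D-1}$ with $\ip{\tilde\phi(\x)}{\tilde\phi(\y)}=e^{-\norm{\x-\y}_s^s}\pm o(1)$.

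Next, rescale the input by a factor $\tau>0$ chosen so that $t:=(\tau r)^s=o(1)$ (e.g.\ $t=1/\log\log n$), which leaves the approximation ratio $c$ unchanged. After the embedding, the scaled problem becomes $(\alpha,\beta)$-similarity on $\sphere{D-1}$ with $\alpha=e^{-t}$ and $\beta=e^{-c^s t}=\alpha^{c^s}$, and I apply Theorem~\ref{thm:spherevanilla} with these values and the given $\lambda$. Using $1-e^{-at}=at+O(t^2)$ as $t\to0$, the three building blocks satisfy
\[
\frac{(1-\alpha^{1+\lambda})^2}{1-\alpha^2}\sim\tfrac{(1+\lambda)^2}{2}\,t,\qquad \frac{(\alpha^{\lambda}-\alpha)^2}{1-\alpha^2}\sim\tfrac{(1-\lambda)^2}{2}\,t,
\]
\[
\frac{(1-\alpha^{\lambda}\beta)^2}{1-\beta^2}\sim\frac{(c^s+\lambda)^2}{2c^s}\,t,
\]
and taking ratios yields $\rho_q\to c^s(1+\lambda)^2/(c^s+\lambda)^2$ and $\rho_u\to c^s(1-\lambda)^2/(c^s+\lambda)^2$, exactly the exponents claimed.

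The main obstacle is making the reduction lossless at the level of the framework, i.e.\ guaranteeing that the $o(1)$ perturbation of $\alpha,\beta$ inherited from the random features contributes only an $n^{o(1)}$ factor to query time, update time, and space. This amounts to checking that the partial derivatives of $\rho_q,\rho_u$ with respect to $\alpha,\beta$ are bounded in a neighbourhood of the limit point, so the inner-product error translates into an $o(1)$ additive error in the exponents; the rates at which $t\to 0$ and $D\to\infty$ must be coordinated with the convergence of the exponents. Once $D=n^{o(1)}$, the dimension-linear factor in the sphere algorithm of Theorem~\ref{thm:spherevanilla} is absorbed into $n^{o(1)}$, while the $O(dD)$ cost of evaluating $\phi(\x)$ on a new point contributes only the prescribed $d$ factor to the overall complexity.
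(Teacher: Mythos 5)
Your proposal follows essentially the same route as the paper: apply the Rahimi--Recht random-feature map for the $s$-stable characteristic function $k(\Delta)=e^{-\norm{\Delta}_s^s}$ to embed onto the sphere, rescale so that $r^s=o(1)$, and then feed $\alpha=e^{-r^s}\pm o(1)$, $\beta=e^{-c^s r^s}\pm o(1)$ into Theorem~\ref{thm:spherevanilla} and take the small-$t$ limit. Your Taylor-expansion bookkeeping, the observation that Hoeffding applies to the bounded cosine features rather than the heavy-tailed projections, and the care about coordinating the rates of $t\to 0$ and $D\to\infty$ all match the paper's Lemma~\ref{lem:rahimirecht}, Theorem~\ref{thm:characteristiclsf}, and the sketch in Section~\ref{sec:ls}.
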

This result improves upon the state of the art for every choice of asymmetric query/update exponents $\rho_q \neq \rho_u$ \cite{panigrahy2006, andoni2006, andoni2009, kapralov2015}.
We conjecture that this tradeoff is optimal among the class of algorithms that \emph{independently of the data} determine which locations in memory to probe during queries and updates.
In the case of a balanced space-time tradeoff where we set $\rho_q = \rho_u$ our approach matches existing, optimal \cite{odonnell2014}, 
data-independent solutions in $\ell_s$-spaces~\cite{indyk1998, datar2004, andoni2006, nguyen2014}.

The LSF framework is very similar to the LSH framework, 
especially in the case where the filter family is symmetric ($Q = U$ for every filter in $\LSF$).
In this setting we show that the LSH lower bound by O'Donnell applies to the LSF framework as well \cite{odonnell2014}, 
confirming that the results of Theorem \ref{thm:lsvanilla} are optimal when we set $\rho_q = \rho_u$.
\begin{theorem}[informal] \label{thm:lsflshvanilla}
	Every filter family that is symmetric and $(r, cr, p_1, p_2, p_q, p_u)$-sensitive in $\ell_{s}^{d}$ 
	must have $\rho = \frac{\log (p_u /p_1)}{\log (p_q / p_2)} \geq 1/c^{s} - o_{d}(1)$ when $r = \omega_{d}(1)$ is chosen to be sufficiently small.
\end{theorem}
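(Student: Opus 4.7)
The plan is to adapt the LSH lower bound of O'Donnell et al.\ to the symmetric LSF framework via a reduction that rewrites the LSF exponent in LSH form. Observing that the worst case for the lower bound is achieved by setting $p_1 = \E_{U\sim\LSF}[\Pr_{\rho_c}[\x, \y \in U]]$, $p_2 = \E_U[\Pr_{\rho_f}[\x, \y \in U]]$, and $p_q = \E_U[\mu(U)]$ under the $s$-stable hard distribution on $\real^d$, the exponent becomes
$$\rho = \frac{\log(p_q/p_1)}{\log(p_q/p_2)} = \frac{\log(1/\tilde q_{\rho_c})}{\log(1/\tilde q_{\rho_f})},$$
where $\tilde q_\rho := \E_U[\Pr_\rho[\x, \y \in U]]/p_q$ plays the role of a conditional collision probability $\Pr[\y \in U \mid \x \in U]$. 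This already matches the functional form of the LSH exponent.

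Next, I would instantiate the O'Donnell hard instance: i.i.d.\ samples from an $s$-stable product measure on $\real^d$ at radius $r = \omega_d(1)$ with $r = o(d^{1/s})$, so that close pairs are $\rho_c$-correlated and far pairs $\rho_f$-correlated with $(1-\rho_f)/(1-\rho_c) \to c^s$. In the Gaussian case $s=2$, the Hermite expansion of $\mathbf{1}_U$ for each filter $U$ gives $\Pr_\rho[\x, \y \in U] = \sum_{k \geq 0} a_k(U)\rho^k$ with $a_k(U) \geq 0$, $a_0(U) = \mu(U)^2$, and $\sum_k a_k(U) = \mu(U)$. Averaging over $U \sim \LSF$ and dividing by $p_q$ yields
$$\tilde q_\rho = \sum_{k \geq 0} C_k \rho^k, \quad C_k \geq 0, \quad \sum_k C_k = 1,$$
which is precisely the convex noise-kernel mixture form that appears in the LSH analysis of O'Donnell et al.

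Applying their bound to this mixture gives $\log(1/\tilde q_{\rho_c}) \geq (1/c^s - o_d(1))\log(1/\tilde q_{\rho_f})$, yielding the theorem. The main obstacles are twofold. First, one must verify that the O'Donnell proof applies to the full family of convex noise-kernel mixtures and not only to the subclass arising from hash-partition structures; a careful inspection shows that only the normalization $\sum_k C_k = 1$ and nonnegativity of $\{C_k\}$ are used, both of which hold under our reduction (the partition-specific lower bound $C_0 \geq 1/|\mathrm{range}(h)|$ is never invoked). Second, for general $s \in (0, 2]$ the Hermite spectral decomposition is unavailable and must be replaced by an $s$-stable spectral calculus via a Bochner/Rahimi--Recht representation (mirroring the upper bound of Theorem \ref{thm:lsvanilla}), while preserving nonnegativity and the sum-to-$\mu(U)$ property; this step, together with the small-$r$ correlation identity $(1-\rho_f)/(1-\rho_c) \to c^s$ in the chosen regime, is where the $o_d(1)$ slack in the bound arises.
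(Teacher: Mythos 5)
Your route is genuinely different from the paper's. The paper proves this by a short black-box reduction: it constructs an actual $(r, cr, p_1/(2p_q), p_2/p_q)$-sensitive LSH family from the symmetric LSF family via a MinHash-style construction ($h(\x) = \min\{i : \x \in F_i\}$ over an infinite filter sequence, giving $\Pr[h(\x)=h(\y)] = \Pr[\x\in F \wedge \y\in F]/\Pr[\x\in F \vee \y\in F]$), then powers the underlying filter family with $\kappa = \omega_d(1)$ so that the factor-of-two slack between $p_1/(2p_q)$ and $p_1/p_q$ vanishes in the exponent, and finally invokes O'Donnell et al.'s LSH lower bound wholesale. You instead propose to re-open O'Donnell's proof and re-run its spectral argument directly on the quantity $\tilde q_\rho = \E_U[\Pr_\rho[\x,\y\in U]]/\E_U[\mu(U)]$, observing that this is a noise-kernel mixture $\sum_k C_k \rho^k$ with $C_k \geq 0$, $\sum_k C_k = 1$. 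The paper's reduction buys modularity and robustness: you never need to look inside O'Donnell's concentration and discretization machinery, and the general-$s$ case comes for free from their theorem statement. Your re-derivation, if it works, would be more self-contained and would make the role of the Hermite/noise-kernel structure explicit.

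That said, there are two gaps in what you have written. First, your choice $p_1 = \E_U[\Pr_{\rho_c}[\x,\y\in U]]$ (and similarly for $p_2$) is not a legitimate worst-case assignment as stated: the LSF definition constrains $p_1$ by the infimum over pairs at distance exactly $\leq r$, whereas $\rho_c$-correlated pairs have distance only concentrated near the target value, not bounded by it. Bridging this gap (dealing with pairs on the ``wrong side'' of the threshold) is precisely where O'Donnell et al.\ spend their technical effort and where the $o_d(1)$ error really comes from; your proposal acknowledges the $(1-\rho_f)/(1-\rho_c) \to c^s$ limit but does not engage with the concentration step that makes the approximation legitimate. Second, your extension to general $s \in (0,2]$ via an ``$s$-stable spectral calculus'' is a conjectured analogue rather than an argument; O'Donnell's treatment of $\ell_s$ is not a straightforward Hermite-style expansion, and verifying that the nonnegativity and normalization you rely on survive the $s$-stable embedding is nontrivial. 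The reduction approach in the paper sidesteps both issues cleanly: Lemma~\ref{lem:lsftolsh} produces an honest LSH family, the powering from Lemma~\ref{lem:powering} kills the constant, and Theorem~\ref{thm:odlower} then applies as a black box for all $s$. You may find it worthwhile to compare your noise-kernel observation against the MinHash construction: the two normalizations ($\Pr[\x\in F]$ versus $\Pr[\x\in F \vee \y\in F]$) differ by at most a factor of two, which is exactly the factor that powering eliminates.
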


Finally we show a lower bound on the space-time tradeoff that can be obtained in the LSF framework.
Our lower bound suffers from two important restrictions. 
First the filter family must be regular, meaning that all query filters and all update filters are of the same size. 
Secondly, the size of the query and update filter cannot differ by too much.
\begin{theorem}[informal] \label{thm:lowertradeoffvanilla}
    Every regular filter family that is $((1-\alpha)d/2, (1-\beta)d/2, p_1, p_2, p_q, p_u)$-sensitive in $d$-dimensional Hamming space 
	with asymmetry controlled by $\lambda \in [-1,1]$ 
	cannot simultanously have that
	\mbox{$\rho_q < \frac{(1-\alpha^{1+\lambda})^2}{1-\alpha^2} - o_{d}(1)$} 
	and $\rho_u < \frac{(\alpha^{\lambda} -\alpha)^2}{1-\alpha^2} - o_{d}(1)$. 
\end{theorem}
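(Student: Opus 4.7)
The plan is to extend the hypercontractive LSH lower bound of O'Donnell et al.\ to the asymmetric two-function LSF setting. Since the claim concerns the filter family itself rather than a reduction from a data structure, I work directly with the four log-probabilities $L_1 = \log(1/p_1)$, $L_2 = \log(1/p_2)$, $L_q = \log(1/p_q)$, $L_u = \log(1/p_u)$ and bound $\rho_q = (L_1 - L_q)/(L_2 - L_q)$ and $\rho_u = (L_1 - L_u)/(L_2 - L_q)$.

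The denominator is handled by a simple averaging argument: because the family is regular, for two uniform independent points in the cube one has $\Pr[\x \in Q, \y \in U] = p_q p_u$ exactly (the expectation factorizes since $Q, U$ are sampled jointly but $\x, \y$ are independent of them). The lower bound must apply for every admissible far-pair correlation $\beta \in [0, \alpha)$, so taking $\beta$ as close to $0$ as possible (the weakest far-pair constraint, hence worst case for the bound) forces $p_2 \geq p_q p_u$ and therefore $L_2 - L_q \leq L_u + o_d(1)$. Thus everything reduces to a sharp lower bound on $L_1$ in terms of $L_q$ and $L_u$.

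The core analytic ingredient is a two-function ``small-set expansion'' inequality: for any $Q, U \subseteq \{0,1\}^d$ of densities $p_q, p_u$ and $\alpha$-correlated uniform $(\x, \y)$,
\[
L_1 \;\geq\; \frac{L_q + L_u - 2\alpha\sqrt{L_q L_u}}{1-\alpha^2} - o_d(1).
\]
I would prove this by lifting the cube $\{0,1\}^d$ to $\real^d$ via an invariance principle so that $\alpha$-correlated Boolean pairs become $\alpha$-correlated Gaussians, and then applying Borell's Gaussian isoperimetric theorem: among all pairs of measurable sets of prescribed Gaussian measures, the collision probability under the $\alpha$-noise coupling is maximized by two parallel halfspaces at thresholds $\sqrt{2 L_q}$ and $\sqrt{2 L_u}$, and a direct Gaussian-tail computation for this configuration yields the stated bound. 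Parameterizing the asymmetry by $\sqrt{L_q/L_u} = \alpha^{\lambda}$ (so $\lambda = 0$ recovers the symmetric case), substituting into $\rho_q \geq (L_1 - L_q)/L_u$ and $\rho_u \geq (L_1 - L_u)/L_u$, and performing the short algebraic simplification $1 + \alpha^{2\lambda} - 2\alpha^{1+\lambda} = (\alpha^{\lambda}-\alpha)^2 + (1-\alpha^2)$ recovers exactly the target exponents $(1-\alpha^{1+\lambda})^2/(1-\alpha^2)$ and $(\alpha^{\lambda}-\alpha)^2/(1-\alpha^2)$.

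The main obstacle is preserving tight constants through the transfer from Gaussian space back to the Boolean cube: Borell's rearrangement is clean in the Gaussian world, but the Mossel-O'Donnell-Oleszkiewicz-style invariance error must remain $o_d(1)$ uniformly in $\alpha$ bounded away from $1$ and in the chosen densities. The two stated restrictions reflect precisely this technical requirement: \emph{regularity} supplies the fixed densities on which the rearrangement operates, and the bounded-asymmetry hypothesis (equivalent to keeping $L_u/L_q$ in a bounded range) keeps the extremal halfspaces non-degenerate so that the invariance bound can be made uniform; relaxing either would require an extra concentration step not handled by the rearrangement argument.
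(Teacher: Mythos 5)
Your high-level plan matches the paper's (Appendix~D): use regularity to lower-bound $p_q, p_u$ by the densities $|Q|/2^d, |U|/2^d$, use independent uniform pairs to lower-bound $p_2$ by $p_q p_u$, use a two-function small-set-expansion inequality at correlation $\alpha$ to upper-bound $p_1$, and optimize the two ratios. Your inequality $L_1 \geq (L_q + L_u - 2\alpha\sqrt{L_q L_u})/(1-\alpha^2) - o_d(1)$ is, after substituting $L_q = \alpha^{2\lambda} L_u$, exactly the bound the paper uses, and the downstream algebra giving $(1-\alpha^{1+\lambda})^2/(1-\alpha^2)$ and $(\alpha^\lambda-\alpha)^2/(1-\alpha^2)$ is correct.

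The gap is in how you propose to establish that inequality. The paper quotes O'Donnell's Generalized Small-Set Expansion Theorem (\emph{Analysis of Boolean Functions}, p.~285), which is proved directly on the hypercube via the two-function reverse hypercontractive inequality and holds for \emph{arbitrary} subsets $Q, U \subseteq \{0,1\}^d$ of arbitrary density, including sub-constant densities, which is precisely the regime needed here since $|U|/2^d$ may vanish with $d$. Your proposed route, an MOO-style invariance transfer to Gaussian space followed by Borell's rearrangement, runs into the obstruction you flag but do not resolve: the invariance principle controls low-influence multilinear polynomials, and the indicator of an arbitrary small-density subset of the cube need not be low-influence, so the transfer error is not $o_d(1)$. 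The two hypotheses you invoke to close this do not in fact address it. In the paper, regularity is used only to make the averaging bounds $p_q \geq |Q|/2^d$, $p_u \geq |U|/2^d$ tight and to fix a single $\lambda$ across the family, and the restriction on $\lambda$ is needed because the expansion lemma is applied at the bumped correlation $\alpha + \varepsilon$ with $\varepsilon \approx 2\sqrt{\ln d / d}$, after which Hoeffding converts a random $(\alpha+\varepsilon)$-correlated pair into a pair at distance at most $r$ with high probability (a step your outline omits), and one needs the resulting exponent to stay inside the lemma's admissible interval. Neither condition bears on influences or on the invariance error, so the Borell-plus-invariance route would not go through as written; the hypercontractive inequality applied directly on the cube is what makes the argument work.
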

Together our upper and lower bounds imply that the filter family of concentric balls in Hamming space is asymptotically optimal for random data.

\paragraph{Techniques}
The LSF framework in Theorem \ref{thm:lsfvanilla} relies on a careful combination of ``powering'' and ``tensoring'' techniques.
For positive integers $m$ and $\tau$ with $m \gg \tau$ the tensoring technique, 
a variant of which was introduced by Dubiner \cite{dubiner2010}, allows us to simulate a collection of $\binom{m}{\tau}$ filters 
from a collection of $m$ filters by considering the intersection of all $\tau$-subsets of filters.
Furthermore, given a point $\x \in X$ we can efficiently list the simulated filters that contain $\x$. 
This latter property is crucial as we typically need $\poly(n)$ filters to split our data 
into sufficiently small buckets for the search to be efficient.
The powering technique lets us amplify the locality-sensitivity properties of a filter family 
in the same way that powering is used in the LSH framework \cite{indyk1998, andoni2008, odonnell2014}.

To obtain results for worst-case data on the unit sphere we analyze a filter family based on standard normal projections using the same techniques as Andoni et al. \cite{andoni2015} together with existing tail bounds on bivariate Gaussians.
The approximate kernel embedding technique by Rahimi and Recht \cite{rahimi2007} is used to extend the solution on the unit sphere 
to a large class of similarity measures, yielding Theorem \ref{thm:lsvanilla} as a special case. 

The lower bound in Theorem \ref{thm:lsflshvanilla} relies on an argument of contradiction against the LSH lower bounds 
by O'Donnell \cite{odonnell2014} and uses a theoretical, inefficient, construction of a locality-sensitive family of hash functions 
from a locality-sensitive family of filters that is similar to the spherical LSH by Andoni et al. \cite{andoni2014}.

Finally, the space-time tradeoff lower bound from Theorem \ref{thm:lowertradeoffvanilla} is obtained through an application of an 
isoperimetric inequality by O'Donnell \cite[Ch. 10]{od2014} and is similar in spirit to the LSH lower bound by Motwani et al.~\cite{motwani2007}.
\subsection{Related work}
The LSH framework takes a distribution $\LSH$ over hash functions that partition space with the property 
that the probability of two points landing in the same partition is an increasing function of their similarity.
\begin{definition} \label{def:lsh}
	Let $(X, D)$ be a space and let $\LSH$ be a probability distribution over functions $h \colon X \to R$.
	We say that $\LSH$ is $(r, cr, p, q)$-sensitive if for all points $\x, \y \in X$ and $h$ sampled randomly from $\LSH$ the following holds:
	\begin{itemize}
		\item[--] If $D(\x, \y) \leq r$ then $\Pr[h(\x) = h(\y)] \geq p$.
		\item[--] If $D(\x, \y) > cr$ then $\Pr[h(\x) = h(\y)] \leq q$.
	\end{itemize}
\end{definition}
The properties of $\LSH$ determines a parameter $\rho < 1$ that governs the space and time complexity 
of the solution to the $(r, cr)$-near neighbor problem.
\begin{theorem}[LSH framework \cite{indyk1998, har-peled2012}]\label{thm:lsh}
	Suppose we have access to a $(r, cr, p, q)$-sensitive hash family.
	Then we can construct a fully dynamic data structure 
	that solves the $(r, cr)$-near neighbor problem with query time $dn^{\rho + o(1)}$, update time~$dn^{\rho + o(1)}$, 
	and with a space usage of $dn + n^{1 + \rho + o(1)}$ where \mbox{$\rho = \frac{\log (1 / p) }{\log (1 / q) }$}. 
\end{theorem}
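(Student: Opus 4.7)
The plan is to recover the classical Indyk--Motwani AND/OR amplification. First I would concatenate $k$ independently sampled hash functions $h_1, \ldots, h_k \sim \LSH$ into a product hash $g(\x) = (h_1(\x), \ldots, h_k(\x))$, so that near pairs collide under $g$ with probability at least $p^k$ while far pairs collide with probability at most $q^k$. Choosing $k = \lceil \log_{1/q} n \rceil$ pushes the far-collision probability below $1/n$ and makes the near-collision probability $p^k = n^{-\rho + o(1)}$ where $\rho = \log(1/p)/\log(1/q)$. Then I would build $L = \lceil n^{\rho} \ln n \rceil$ independent hash tables $T_1, \ldots, T_L$ using independent product hashes $g_1, \ldots, g_L$; an \emph{insert} of $\y$ places $\y$ in bucket $g_i(\y)$ of $T_i$ for each $i$, and a \emph{query} on $\x$ scans the $L$ buckets $g_i(\x)$, reporting the first data point it finds within distance $cr$.

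For correctness I would fix a putative near neighbor $\y^{*} \in P$ and observe that the probability that no $g_i$ places $\x$ and $\y^{*}$ in the same bucket is at most $(1 - p^k)^L \leq e^{-p^k L} = e^{-\Omega(\ln n)} = o(1)$, so with probability $1 - o(1)$ the query reports some point within distance $cr$. For running time the key step is to bound the cost of spurious collisions: by linearity of expectation the total number of far points in the $L$ probed buckets is at most $L \cdot n \cdot q^k \leq L$. To convert this in-expectation bound into a worst-case guarantee, I would have the query algorithm abort after scanning, say, $3L$ candidates; Markov's inequality then gives constant failure probability per query, which is driven to $1 - 1/\poly(n)$ by running $O(\log n)$ independent copies of the whole structure, costing only additional $n^{o(1)}$ in the exponent.

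Putting the pieces together, the space is $O(dn)$ for storing $P$ plus $O(nL) = n^{1+\rho+o(1)}$ pointers across the $L$ hash tables, and both query and update touch $L$ tables at cost $O(dk)$ per product-hash evaluation plus $O(d)$ per candidate distance computation, giving the claimed $dn^{\rho + o(1)}$ bounds in both directions. The step I expect to require the most care is threading the $o(1)$ terms consistently: one has to absorb the $\log n$ repetition factors for high-probability correctness, the bit complexity of the hash codomain, and the per-function evaluation cost of a single $h \sim \LSH$ (which is implicit in the ``access to a sensitive hash family'' hypothesis) so that they all fit inside a single $n^{o(1)}$ slack. Everything else is routine bookkeeping around the two amplification knobs $k$ and $L$.
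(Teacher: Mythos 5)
Your proposal is the standard Indyk--Motwani AND/OR amplification argument, which is exactly the construction from the cited references \cite{indyk1998, har-peled2012}; the paper states Theorem~\ref{thm:lsh} as background without reproving it. Your choices of $k = \lceil \log_{1/q} n \rceil$ and $L = \Theta(n^{\rho}\log n)$, the Markov/early-abort trick for the far-point cost, and the observation that the $\log n$ repetition factors and per-hash evaluation cost must be absorbed into the $n^{o(1)}$ slack are all correct and match the classical treatment.
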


The LSF framework by Becker et al. \cite{becker2016} takes a symmetric $(r, cr, p_1, p_2, p_q, p_u)$-sensitive 
filter family $\LSF$ and produces a data structure that solves the 
$(r, cr)$-near neighbor problem with the same properties as the one produced by the LSH framework 
where instead we have \mbox{$\rho = \frac{\log(p_{q}/p_{1})}{\log(p_{q}/p_{2})}$}.
In addition, the framework assumes access to an oracle that is able to efficiently list the relevant filters containing a point $\x \in X$
out of a large collection of filters.
The LSF framework in this paper removes this assumption, showing how to construct an efficient oracle as part of the framework. 

In terms of frameworks that support space-time tradeoffs, Panigrahy \cite{panigrahy2006} developed a 
framework based on LSH that supports the two extremes of the space-time tradeoff. 
In the language of Theorem \ref{thm:lsfvanilla}, 
Panigrahy's framework supports either setting $\rho_u = 0$ for a solution that uses near-linear space at the cost of a slower query time,
or setting $\rho_q = 0$ for a solution with query time $n^{o(1)}$ at the cost of a higher space usage.
To obtain near-linear space the framework stores every data point in $n^{o(1)}$ partitions 
induced by randomly sampled hash functions from a $(r, cr, p, q)$-sensitive LSH family $\LSH$.
In comparison, the standard LSH framework from Theorem \ref{thm:lsh} uses $n^{\rho}$ such partitions where $\rho$ is determined by $\LSH$.
For each partition induced by $h \in \LSH$ the query algorithm in Panigrahy's framework generates a number of random points $\z$ in a ball 
around the query point $\x$ and searches the parts of the partition $h(\z)$ that they hash to.
The query time is bounded by $n^{\hat{\rho} + o(1)}$ where $\hat{\rho} = \frac{I(h(\z) | \x, h)}{\log(1/q)}$ 
and $I(h(\z) | \x, h)$ denotes conditional entropy, i.e. the query time is determined by how hard it is to guess where 
$\z$ hashes to given that we know $\x$ and $h$.
Panigrahy's technique was used in a number of follow-up works that improve on solutions for specific spaces,
but to our knowledge none of them state a general framework with space-time tradeoffs \cite{lv2007, andoni2009, kapralov2015}.

\paragraph{Upper bounds}
As is standard in the literature we state results in $\ell_s$-spaces in terms of the properties of a solution to the $(r, cr)$-near neighbor problem.
For results on the unit sphere under inner product similarity $(\sphere{d}, \ip{\cdot}{\cdot})$
we instead use the $(\alpha, \beta)$-similarity terminology, defined in the introduction, as we find it to be 
cleaner and more intuitive while aligning better with the analysis.
The $\ell_s$-spaces, particularly $\ell_1$ and $\ell_2$, as well as $(\sphere{d}, \ip{\cdot}{\cdot})$ are some of 
most well-studied spaces for similarity search and are also widely used in practice \cite{wang2014}. 
Furthermore, fractional norms ($\ell_s$ for $s \neq 1, 2$) have been shown to perform better than the standard norms 
in certain use cases \cite{aggarwal2001} which motivates finding efficient solutions to the near neighbor problem in general $\ell_s$-space. 

In the case of a balanced space-time tradeoff the best data-independent upper bound for the $(r, cr)$-near neighbor problem in $\ell_s^d$ 
are solutions with an LSH exponent of $\rho = 1/c^s$ for $0 < s \leq 2$.
This result is obtained through a combination of techniques.
For \mbox{$0 < s \leq 1$} the LSH based on $s$-stable distributions by Datar et al. \cite{datar2004} 
can be used to obtain an exponent of $(1 + \varepsilon)/c^s$ for an arbitrarily small constant $\varepsilon > 0$.
For \mbox{$1 < s \leq 2$} the ball-carving LSH by Andoni and Indyk~\cite{andoni2006} for Euclidean space can be extended to $\ell_s$
using the technique described by Nguyen \cite[Section 5.5]{nguyen2014}.
Theorem \ref{thm:lsvanilla} matches (and potentially improves in the case of $0 < s < 1$) 
these results with a single unified technique and analysis that we find to be simpler.

For space-time tradeoffs in Euclidean space (again extending to $\ell_s$ for $1 < s < 2$) Kapralov \cite{kapralov2015},
improving on Panigrahy's results \cite{panigrahy2006} in Euclidean space and using similar techniques, obtains a solution with 
query exponent \mbox{$\rho_q = \frac{c^2 (1 + \lambda)^2}{(c^2 + \lambda)^2 - c^2 (1+\lambda^2)/2 - \lambda^2}$}
and update exponent \mbox{$\rho_u = \frac{c^2 (1 - \lambda)^2}{(c^2 + \lambda)^2 - c^2 (1+\lambda^2)/2 - \lambda^2}$} 
under the condition that $c^2 \geq (1+\lambda)^2 / 2 + \lambda + \varepsilon$ where $\varepsilon > 0$ is an arbitrary positive constant.
Comparing to our Theorem \ref{thm:lsvanilla} it is easy to see that we improve upon Kapralov's space-time tradeoff 
for all choices of $c$ and $\lambda$. 
In addition, Theorem \ref{thm:lsvanilla} represents the first solution to the $(r, cr)$-near neighbor problem in Euclidean space 
that for every choice of constant $c > 1$ obtains sublinear query time ($\rho_q < 1$) using only near-linear space ($\rho_u = 0$). 
Due to the restrictions on Kapralov's result he is only able to obtain sublinear query time for $c > \sqrt{3}$ 
when the space usage is restricted to be near-linear. 
It appears that our improvements can primarily be attributed to our techniques allowing a more direct analysis.
Kapralov uses a variation of Panigrahy's LSH-based technique of, depending on the desired space-time tradeoff, 
either querying or updating additional memory locations around a point $\x \in X$ in the partition induced by $h \in \LSH$.
For a query point $\x$ and a near neighbor $\y$ his argument for correctness is based on guaranteeing that both the query algorithm 
and update algorithm visit the part $h(\z)$ where $\z$ is a point lying between $\x$ and $\y$, 
possibly leading to a loss of efficiency in the analysis.
More details on the comparison of Theorem \ref{thm:lsvanilla} to Kapralov's result can be found in Appendix~\ref{app:kapralov}. 

In terms of space-time tradeoffs on the unit sphere, Laarhoven \cite{laarhoven2015} modifies a filter family 
introduced by Becker et al. \cite{becker2016} to support space-time tradeoffs,
obtaining a solution for random data on the unit sphere (the $(\alpha, \beta)$-similarity problem with $\beta = o_{d}(1)$) 
with query exponent \mbox{$\rho_q = \frac{(1 - \alpha^{1 + \lambda})^2}{1 - \alpha^2}$} 
and update exponent \mbox{$\rho_u = \frac{(\alpha^{\lambda} - \alpha)^2}{1 - \alpha^2}$}.
Theorem \ref{thm:spherevanilla} extends this result to provide a solution to the $(\alpha, \beta)$-similarity problem on the unit sphere
for every choice of $0 \leq \beta < \alpha < 1$.
This extension to worst case data is crucial for obtaining our results for $\ell_s$-spaces in Theorem \ref{thm:lsvanilla}. 
We note that there exist other data-independent techniques (e.g. Valiant \cite[Alg. 25]{valiant2015}) 
for extending solutions on the unit sphere to $\ell_2$, but they also require a solution for worst-case data on the unit sphere to work.

\paragraph{Lower bounds}
The performance of an LSH-based solution to the near neighbor problem in a given space that uses
a $(r, cr, p, q)$-sensitive family of hash functions $\LSH$ is summarized by the value of the exponent $\rho = \frac{\log(1/p)}{\log(1/q)}$. 
It is therefore of interest to lower bound $\rho$ in terms of the approximation factor $c$.
Motwani et al.~\cite{motwani2007} proved the first lower bound for LSH families in $d$-dimensional Hamming space. 
They show that for every choice of $c \geq 1$ then for some choice of $r$ it must hold that $\rho \geq 0.462/c$ as $d$ goes to infinity
under the assumption that $q$ is not too small ($q \geq 2^{-o(d)}$). 

As part of an effort to show lower bounds for data-dependent locality-sensitive hashing,
Andoni and Razenshteyn \cite{andoni2016datalower} strengthened the lower bound by Motwani et al.\ to $\rho \geq 1/(2c - 1)$ in Hamming space.
These lower bounds are initially shown in Hamming space and can then be extended to $\ell_s$-space and the unit sphere 
by the fact that a solution in these spaces can be used to yield a solution in Hamming space, 
contradicting the lower bound if $\rho$ is too small.
Translated to $(\alpha, \beta)$-similarity on the unit sphere, 
which is the primary setting for the lower bounds on LSF space-time tradeoffs in this paper,
the lower bound by Andoni and Razenshteyn shows that an LSH on the unit sphere must have $\rho \geq \frac{1-\alpha}{1+\alpha}$ 
which is tight in the case of random data~\cite{andoni2015}. 

The lower bound uses properties of random walks over a partition of Hamming space: 
A random walk starting from a random point $\x \in \cube{d}$ is likely to ``walk out'' 
of the the part identified by $h(\x)$ in the partition induced by $h$.
The space-time tradeoff lower bound in Theorem \ref{thm:lowertradeoffvanilla} relies on a similar argument 
that lower bounds the probability that a random walk starting from a subset $Q$ ends up in another subset $U$, 
corresponding nicely to query and update filters in the LSF framework. 

Using related techniques O'Donnell \cite{odonnell2014} showed tight LSH lower bounds for $\ell_s$-space of $\rho \geq 1/c^s$.
The work by Andoni et al.~\cite{andoni2006lower} and Panigrahy et al.~\cite{panigrahy2008, panigrahy2010} 
gives cell probe lower bounds for the $(r, cr)$-near neighbor problem, 
showing that in Euclidean space a solution with a query complexity of $t$ probes require space at least $n^{1 + \Omega(1/tc^2)}$. 
For more details on these lower bounds and how they relate to the upper bounds on the unit sphere see \cite{andoni2016, laarhoven2015}.

\paragraph{Data-dependent solutions}
The solutions to the $(r, cr)$-near neighbor problems considered in this paper are all \emph{data-independent}.
For the LSH and LSF frameworks this means that the choice of hash functions or filters used by the data structure,
determining the mapping between points in space and the memory locations that are searched during the query and update algorithm, 
is made without knowledge of the data.
Data-independent solutions to the $(r, cr)$-near neighbor problem for worst-case data have been the state of the art
until recent breakthroughs by Andoni et al.~\cite{andoni2014} and Andoni and Razenshteyn~\cite{andoni2015data}
showing improved solutions to the $(r, cr)$-near neighbor problem in Euclidean space using \emph{data-dependent} techniques.
In this setting the solution obtained by Andoni and Razenshteyn has an exponent of $\rho = 1/(2c^2 - 1)$ 
compared to the optimal data-independent exponent of $\rho = 1/c^2$.
Furthermore, they show that this exponent is optimal for data-dependent solutions in a restricted model \cite{andoni2016datalower}.

\paragraph{Recent developments} 
Recent work by Andoni et al.~\cite{andoni2016optimal}, done independently of and concurrently with this paper, 
shows that Laarhoven's upper bound for random data on the unit sphere can be combined with data-dependent techniques~\cite{andoni2015data} 
to yield a space-time tradeoff in Euclidean space with $\rho_u, \rho_q$ satisfying $c^2 \sqrt{\rho_q} + (c - 1)\sqrt{\rho_u} = \sqrt{2c^2 - 1}$.
This improves the result of Theorem~\ref{thm:lsvanilla} and matches the lower bound in Theorem~\ref{thm:lowertradeoffvanilla}.
In the same paper they also show a lower bound matching our lower bound in Theorem \ref{thm:lowertradeoffvanilla}.
Their lower bound is set in a more general model that captures both the LSH and LSF framework 
and they are able to remove some of the technical restrictions such as the filter family being regular that weaken the lower bound in this paper.
In spite of these results we still believe that this paper presents an important contribution 
by providing a general and simple framework with space-time tradeoffs as well as improved data-independent solutions 
to nearest neighbor problems in $\ell_s$-space and on the unit sphere. 
We would also like to point out the simplicity and power of using Rahimi and Recht's~\cite{rahimi2007} result 
to extend solutions on the unit sphere to spaces with similarity measures corresponding to real-valued characteristic functions, 
further described in Appendix~\ref{app:characteristic}.
\section{A framework with space-time tradeoffs}
We use a combination of powering and tensoring techniques to amplify the locality-sensitive properties of our initial filter family, 
and to simulate a large collection of filters that we can evaluate efficiently.
We proceed by stating the relevant properties of these techniques which we then combine to yield our Theorem \ref{thm:lsfvanilla}. 
\begin{lemma}[powering]\label{lem:powering}
	Given a $(r, cr, p_1, p_2, p_q, p_u)$-sensitive filter family $\LSF$ for $(X, D)$ 
	and a positive integer $\kappa$ define the family $\LSF^{\kappa}$ as follows:
	we sample a filter $F = (Q, U)$ from $\LSF^{\kappa}$ by sampling 
	$(Q_1, U_1), \dots, (Q_\kappa, U_\kappa)$ independently from $\LSF$
	and setting $(Q, U) = (\bigcap_{i=1}^{\kappa} Q_i, \bigcap_{i=1}^{\kappa} U_i)$. 
	The family $\LSF^{\kappa}$ is $(r, cr, p_{1}^{\kappa}, p_{2}^{\kappa}, p_{q}^{\kappa}, p_{u}^{\kappa})$-sensitive for $(X, D)$.
\end{lemma}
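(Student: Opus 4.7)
The plan is a direct independence argument, applied separately to each of the four probability bounds in Definition \ref{def:lsf}. First I would fix arbitrary points $\x, \y \in X$ and the independent samples $(Q_1, U_1), \dots, (Q_\kappa, U_\kappa)$ from $\LSF$ that define $(Q, U) = (\bigcap_{i=1}^{\kappa} Q_i, \bigcap_{i=1}^{\kappa} U_i)$. The key structural observation is that membership in an intersection factors through membership in each set: $\x \in Q$ iff $\x \in Q_i$ for all $i$, and $\y \in U$ iff $\y \in U_i$ for all $i$. Consequently the joint event $\{\x \in Q, \y \in U\}$ is exactly the intersection $\bigcap_{i=1}^{\kappa}\{\x \in Q_i,\, \y \in U_i\}$.

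Next I would invoke mutual independence of the samples $(Q_i, U_i)$ to multiply the per-coordinate probabilities:
\[
\Pr[\x \in Q,\, \y \in U] = \prod_{i=1}^{\kappa} \Pr[\x \in Q_i,\, \y \in U_i].
\]
Applying the sensitivity hypothesis to each factor then yields the first two conditions of the lemma: if $D(\x,\y) \le r$ every factor is at least $p_1$, giving a product $\ge p_1^{\kappa}$; if $D(\x,\y) > cr$ every factor is at most $p_2$, giving a product $\le p_2^{\kappa}$. The marginal bounds follow the same recipe: $\{\x \in Q\} = \bigcap_i \{\x \in Q_i\}$ so $\Pr[\x \in Q] = \prod_i \Pr[\x \in Q_i] \le p_q^{\kappa}$, and symmetrically $\Pr[\x \in U] \le p_u^{\kappa}$.

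There is essentially no obstacle here: the only thing one must be careful about is that independence is used across the $\kappa$ filter samples, not across points or coordinates, and that the events one factorizes are indeed conjunctions over $i$ of events depending only on the $i$-th sample. Once that is noted the proof is a one-line computation for each of the four bounds, and the statement of the lemma drops out.
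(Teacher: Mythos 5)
Your proof is correct and is exactly the argument the paper implicitly relies on; the paper in fact states Lemma~\ref{lem:powering} without any proof, treating it as an immediate consequence of independence. Your decomposition of the joint event as $\bigcap_{i=1}^{\kappa}\{\x \in Q_i,\, \y \in U_i\}$, factoring via mutual independence of the $\kappa$ samples, and applying the per-filter bounds (together with the analogous one-point argument for $p_q$ and $p_u$) is precisely the standard computation being elided.
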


Let $\F$ denote a collection (indexed family) of $m$ filters and let $\Q$ and $\U$ denote the corresponding collections of query and update filters, 
that is, for $i \in \{1, \dots, m\}$ we have that $\F_i = (\Q_i, \U_i)$. 
Given a positive integer $\tau \leq m$ (typically $\tau \ll m$) we define $\TLSF$ to be the 
collection of filters formed by taking all the intersections of $\tau$-combinations of filters from $\F$, 
that is, for every $I \subseteq \{1, \dots, m \}$ with~$|I| = \tau$ we have that 
\begin{equation}
\F_{I}^{\otimes \tau}\! = \left(\mcap_{i \in I} \Q_{i}, \mcap_{i \in I} \U_{i}\right) 
\end{equation}
The following properties of the tensoring technique will be used to provide correctness, running time, 
and space usage guarantees for the LSF data structure that will be introduced in the next subsection.
We refer to the evaluation time of a collection of filters $\F$ as the time it takes, given a point $\x \in X$ 
to prepare a list of query filters $\Q(x) \subseteq \Q$ containing $\x$ and a list of update filters $\U(x) \subseteq \U$ containing $\x$
such that the next element of either list can be reported in constant time.
We say that a pair of points $(\x, \y)$ is contained in a filter $(Q, U)$ if~$\x \in Q$ and~$\y \in U$.
\begin{lemma}[tensoring]\label{lem:tensoring}
	Let $\LSF$ be a filter family that is $(r, cr, p_1, p_2, p_q, p_u)$-sensitive in $(X, D)$.
	Let $\tau$ be a positive integer and let $\F$ denote a collection of \mbox{$m = \lceil \tau / p_1 \rceil$} 
	independently sampled filters from~$\LSF$.
	Then the collection $\TLSF$ of $\binom{m}{\tau}$ filters has the following properties:
	\begin{itemize}
		\item[--] If $(\x, \y)$ have distance at most $r$ then with probability at least $1/2$ 
			there exists a filter in $\TLSF$ containing~$(\x, \y)$.
		\item[--] If $(\x, \y)$ have distance greater than $cr$ then the expected number of filters in $\TLSF$ 
			containing $(\x, \y)$ is at most~$p_{2}^{\tau}\binom{m}{\tau}$.
		\item[--] In expectation, a point $\x$ is contained in at most $p_{q}^{\tau}\binom{m}{\tau}$ query filters 
			and at most $p_{u}^{\tau}\binom{m}{\tau}$ update filters in $\TLSF$.
		\item[--] The evaluation time and space complexity of $\TLSF$ is dominated by the time 
			it takes to evaluate and store $m$ filters from $\LSF$. 
	\end{itemize}
\end{lemma}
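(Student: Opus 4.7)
The plan is to prove the four items in turn, in each case relying on the mutual independence of the $m$ constituent filters $\F_1,\ldots,\F_m$ sampled from $\LSF$. Items two and three are expectation bounds and yield to a routine linearity argument, while item one requires a probability lower bound and is the only real obstacle; item four is a purely structural observation about how membership in a tensored filter factors through the constituents.

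For the second and third items, I would apply linearity of expectation over the $\binom{m}{\tau}$ index sets $I \subseteq \{1,\dots,m\}$ of size $\tau$. Fixing $I$ and a pair $(\x,\y)$ with $D(\x,\y) > cr$, the event that $\F_I^{\otimes\tau}$ contains $(\x,\y)$ is the intersection over $i \in I$ of the independent events $\{\x \in \Q_i,\, \y \in \U_i\}$, each of probability at most $p_2$ by sensitivity, so the per-$I$ probability is at most $p_2^\tau$. Summing over the $\binom{m}{\tau}$ choices of $I$ gives the stated bound on the expected count; substituting the bounds $p_q$ and $p_u$ for the per-filter events $\{\x \in \Q_i\}$ and $\{\x \in \U_i\}$ handles the query- and update-filter counts for a single point $\x$.

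The main obstacle is the close-pair item, because we need a \emph{probability} lower bound rather than a positive expected count. Let $Z_i \in \{0,1\}$ indicate that the constituent filter $\F_i$ contains $(\x,\y)$, and let $Z = \sum_{i=1}^m Z_i$; then the number of tensored filters in $\TLSF$ containing $(\x,\y)$ equals $\binom{Z}{\tau}$, which is nonzero precisely when $Z \geq \tau$. By sensitivity and independence, $Z$ stochastically dominates $\mathrm{Bin}(m,p_1)$, and the choice $m = \lceil \tau/p_1\rceil$ ensures $m p_1 \geq \tau$. I would then invoke the classical fact that the median of a binomial differs from its mean by at most one, so the median of $\mathrm{Bin}(m,p_1)$ is at least the integer $\tau$, yielding $\Pr[Z \geq \tau] \geq 1/2$ as required.

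For the final item I would observe the structural fact that $\F_I^{\otimes\tau}$ contains $\x$ in its query component iff $\x \in \Q_i$ for every $i \in I$. Hence, after evaluating the $m$ constituents on $\x$ once to produce $J_q(\x) = \{i : \x \in \Q_i\}$ and $J_u(\x) = \{i : \x \in \U_i\}$, the query (respectively update) filters of $\TLSF$ containing $\x$ are in bijection with the $\tau$-subsets of $J_q(\x)$ (respectively $J_u(\x)$), which can be streamed with constant delay per output by a standard combinatorial enumeration. In particular the $\binom{m}{\tau}$ tensored filters are never materialized, so both the evaluation time and the space usage of $\TLSF$ are dominated by the corresponding costs for the $m$ constituents drawn from $\LSF$.
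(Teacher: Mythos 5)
Your argument follows the same route as the paper: linearity of expectation for the second and third items, the binomial-median fact for the first, and the structural observation (evaluate the $m$ constituents once, then enumerate $\tau$-subsets) for the fourth. The only issue is that you invoke a slightly too-weak form of the median fact. The statement ``the median of a binomial differs from its mean by at most one'' gives only $\mathrm{median} \geq m p_1 - 1 \geq \tau - 1$, which does not yield $\mathrm{median} \geq \tau$ in the boundary case where $m p_1 = \tau$ exactly (for instance whenever $\tau/p_1$ is an integer, so that $m = \tau/p_1$). What you actually need is the sharper Kaas--Buhrman bound that the paper cites: the median of $\mathrm{Bin}(m,p_1)$ is at least $\lfloor m p_1 \rfloor$. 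Combined with $m p_1 \geq \tau$ and $\tau$ being an integer, this gives $\mathrm{median} \geq \lfloor m p_1 \rfloor \geq \tau$, hence $\Pr[Z \geq \tau] \geq 1/2$. With that substitution your stochastic-dominance reduction to $\mathrm{Bin}(m,p_1)$ and the rest of the argument are correct, and in fact your item-four discussion usefully spells out what the paper dismisses as ``trivial.''
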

\begin{proof}
	To prove the first property we note that there exists a filter in $\TLSF$ containing 
	$(\x, \y)$ if at least $\tau$ filters in $\F$ contain $(\x, \y)$.
	The binomial distribution has the property that the median is at least as great as the mean rounded down \cite{buhrman1980}.
	By the choice of $m$ we have that the expected number of filters in $\F$ containing $(\x, \y)$ is at least $\tau$ and the result follows. 
	The second and third properties follow from the linearity of expectation and the fourth is trivial.
\end{proof}
\subsection{The LSF data structure} \label{sec:lsfds}
We will introduce a dynamic data structure that solves the $(r, cr)$-near neighbor problem on a set of points $P \subseteq X$.
The data structure has access to a $(r, cr, p_1, p_2, p_q, p_u)$-sensitive filter family $\LSF$ in the sense that
it knows the parameters of the family and is able to sample, store, and evaluate filters from $\LSF$ in time $dn^{o(1)}$.

The data structure supports an initialization operation that initializes a collection of filters $\F$ 
where for every filter we maintain a (possibly empty) set of points from $X$.
After initialization the data structure supports three operations: \textsc{insert}, \textsc{delete}, and \textsc{query}.
The \textsc{insert} (\textsc{delete}) operation takes as input a point $\x \in X$ and adds (removes) the point from the set of 
points associated with each update filter in $\F$ that contains $\x$.
The \textsc{query} operation takes as input a point $\x \in X$. 
For each query filter in $\F$ that contains $\x$ we proceed by 
computing the dissimilarity $D(\x, \y)$ to every point $\y$ associated with the filter.
If a point $\y$ satisfying $D(\x, \y) \leq cr$ is encountered, then $\y$ is returned and the query algorithm terminates.
If no such point is found, the query algorithm returns a special symbol ``$\varnothing$'' and terminates.

The data structure will combine the powering and tensoring techniques in order to simulate the collection of filters $\F$ from two smaller collections:
$\F_1$ consisting of $m_1$ filters from $\LSF^{\kappa_1}$ and $\F_2$ consisting of $m_2$ filters from $\LSF^{\kappa_2}$.
The collection of simulated filters $\F$ is formed by taking all filters $(Q_1 \cap Q_2, U_1 \cap U_2)$ 
where $(Q_1, U_1)$ is a member of $\F_{1}^{\otimes \tau}\!$ and $(Q_2, U_2)$ is a member of $\F_2$.
It is due to the integer constraints on the parameter $\tau$ in the tensoring technique 
and the parameter $\kappa$ in the powering technique that we simulate our filters from two underlying collections 
instead of just one.
This gives us more freedom to hit a target level of amplification of the simulated filters which in turn
makes it possible for the framework to support efficient solutions for a wider range of parameters of LSF families. 

The initialization operation takes $\LSF$ and parameters $m_1, \kappa_1, \tau, m_2, \kappa_2$ and samples and stores $\F_1$ and $\F_2$.
The filter evaluation algorithm used by the insert, delete, and query operation takes a point $\x \in X$ and computes for $\F_1$ and $\F_2$, 
depending on the operation, the list of update or query filters containing $\x$.  
From these lists we are able to generate the list of filters in $\F$ containing~$\x$.

Setting the parameters of the data structure to guarantee correctness while balancing the contribution to the query time from 
the filter evaluation algorithm, the number of filters containing the query point, and the number of distant points examined, 
we obtain a partially dynamic data structure that solves the $(r, cr)$-near neighbor problem with failure probability $\delta \leq 1/2 + 1/e$.
Using a standard dynamization technique by Overmars and Leeuwen \cite[Thm. 1]{overmars1981} we obtain a fully dynamic data structure 
resulting in Theorem \ref{thm:lsfvanilla}. The details of the proof have been deferred to Appendix~\ref{app:framework}.
\section{Gaussian filters on the unit sphere} \label{sec:gaussianlsf}
In this section we show properties of a family of filters for the unit sphere $\sphere{d}$ under inner product similarity.
Later we will show how to make use of this family to solve the near neighbor problem in other spaces, including $\ell_s$ for $0 < s \leq 2$.
\begin{lemma} \label{lem:gaussianlsf}
	For every choice of $0 \leq \beta < \alpha < 1$, $\lambda \in [-1, 1]$, and $t > 0$ 
	let $\mathcal{G}$ denote the family of filters defined as follows:
	we sample a filter $(Q, U)$ from $\mathcal{G}$ by sampling $\z \sim \mathcal{N}^{d}(0,1)$ and setting
\begin{align}
	Q &= \{ \x \in \real^d \mid \ip{\x}{\z} > \alpha^{\lambda} t \}, \\ 
	U &= \{ \x \in \real^d \mid \ip{\x}{\z} >  t \}.
\end{align}
Then $\mathcal{G}$ is locality-sensitive 
on the unit sphere under inner product similarity with exponents
\small
\begin{align*}
	\rho_q &\leq \left.\left( \frac{(1-\alpha^{1+\lambda})^2}{1-\alpha^2} + 
\frac{\ln(2\pi(1+t/\alpha)^{2})}{t^2 / 2}\right) \middle/  \frac{(1-\alpha^{\lambda} \beta)^2}{1-\beta^2} \right., \\
\rho_u &\leq \left.\left( \frac{(\alpha^{\lambda} - \alpha)^2}{1-\alpha^2} + 
\frac{\ln(2\pi(1+t/\alpha)^{2})}{t^2 / 2}\right) \middle/  \frac{(1-\alpha^{\lambda} \beta)^2}{1-\beta^2} \right..
\end{align*}
\normalsize
\end{lemma}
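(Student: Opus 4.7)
The plan is to reduce the six probabilities defining $\rho_q$ and $\rho_u$ to one- and two-dimensional Gaussian tail probabilities, bound these quantitatively, and then read off the exponents by a short algebraic simplification. The starting observation is that for any $\x \in \sphere{d}$ and $\z \sim \mathcal{N}^{d}(0,1)$, the scalar $\ip{\x}{\z}$ is standard normal, since $\|\x\|_2 = 1$ and the Gaussian is rotationally invariant. Hence $p_q = \Pr[N > \alpha^{\lambda} t]$ and $p_u = \Pr[N > t]$ with $N \sim \mathcal{N}(0,1)$. More generally, for $\x, \y \in \sphere{d}$ with correlation $\sigma := \ip{\x}{\y}$, the pair $(\ip{\x}{\z}, \ip{\y}{\z})$ is centered bivariate Gaussian with unit variances and correlation $\sigma$. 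Because the joint upper-tail probability is monotone increasing in $\sigma$ (Slepian), I would lower-bound $p_1$ by taking $\sigma = \alpha$ (the worst case among pairs with inner product at least $\alpha$) and upper-bound $p_2$ by taking $\sigma = \beta$.

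The core quantitative tool I would invoke is a bivariate Gaussian tail estimate: for unit-variance $(A,B)$ with correlation $\sigma$ and thresholds $s, t$ satisfying $s \geq \sigma t$ and $t \geq \sigma s$,
\begin{equation*}
\frac{1}{2\pi (1+s)(1+t)}\, e^{-L(\sigma, s, t)} \;\leq\; \Pr[A > s,\, B > t] \;\leq\; e^{-L(\sigma, s, t)},
\end{equation*}
where $L(\sigma, s, t) = (s^2 - 2\sigma st + t^2)/(2(1 - \sigma^2))$. The upper bound is obtained by a Chernoff-type argument dominating the joint density over the quadrant by its value at the corner $(s,t)$; the lower bound comes from integrating the density over a small corner box at $(s,t)$ of dimensions roughly $1/s \times 1/t$. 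When $s = \alpha^{\lambda} t$, the threshold conditions reduce to $\alpha^{\lambda} \geq \sigma$ and $\sigma \alpha^{\lambda} \leq 1$, which are easily verified for $\sigma \in \{\alpha, \beta\}$ and $\lambda \in [-1,1]$ using $0 \leq \beta < \alpha < 1$.

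Combining these with standard Mills-ratio estimates for the univariate tails $p_q, p_u$, the exponents follow from three algebraic identities produced by completing the square:
\begin{align*}
L(\alpha, \alpha^{\lambda} t, t) - \tfrac{(\alpha^{\lambda} t)^2}{2} &= \tfrac{t^2}{2} \cdot \tfrac{(1 - \alpha^{1+\lambda})^2}{1 - \alpha^2}, \\
L(\alpha, \alpha^{\lambda} t, t) - \tfrac{t^2}{2} &= \tfrac{t^2}{2} \cdot \tfrac{(\alpha^{\lambda} - \alpha)^2}{1 - \alpha^2}, \\
L(\beta, \alpha^{\lambda} t, t) - \tfrac{(\alpha^{\lambda} t)^2}{2} &= \tfrac{t^2}{2} \cdot \tfrac{(1 - \alpha^{\lambda} \beta)^2}{1 - \beta^2}.
\end{align*}
Taking ratios gives the leading terms of $\rho_q = \log(p_q/p_1)/\log(p_q/p_2)$ and $\rho_u = \log(p_u/p_1)/\log(p_q/p_2)$. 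The polynomial prefactors from the tail estimates contribute an additive error in the numerator; bounding $s = \alpha^{\lambda} t \leq t/\alpha$ via $\lambda \geq -1$ and $\alpha < 1$ produces $(1+s)(1+t) \leq (1+t/\alpha)^2$, so after dividing by $t^2/2$ the error is at most $\ln(2\pi(1 + t/\alpha)^2)/(t^2/2)$, matching the claimed correction.

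The main obstacle is the bivariate lower tail bound with a sufficiently sharp and explicit prefactor, together with the bookkeeping needed so that all polynomial corrections cleanly absorb into a single additive term in the numerator of the lemma's stated form rather than being spread across both numerator and denominator. In particular, the corner integration must select a rectangle whose aspect ratio matches the local gradient of the density in order to retain the clean $1/(2\pi(1+s)(1+t))$ coefficient; looser choices yield the right asymptotic shape but weaker constants that would not line up with the stated $2\pi(1 + t/\alpha)^2$ expression. Once that estimate is established, the remaining work is purely arithmetic.
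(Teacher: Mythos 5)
Your overall plan is correct and your algebraic identities check out exactly, but for the bivariate tail lower bound you take a genuinely different — and more delicate — route than the paper. The paper's proof of Lemma~\ref{lem:lower} avoids corner-box integration entirely: writing $X = Z_1$ and $Y = \alpha Z_1 + \sqrt{1-\alpha^2}\,Z_2$, it factors
\begin{equation*}
\Pr[Z_1 \geq t,\; Y \geq \alpha^{\lambda} t] \;\geq\; \Pr[Z_1 \geq t]\,\Pr\!\left[Z_2 \geq \tfrac{(\alpha^{\lambda}-\alpha)t}{\sqrt{1-\alpha^2}}\right],
\end{equation*}
using that $Y$ is increasing in $Z_1$ so one may fix $Z_1 = t$ after conditioning, and then applies the one-dimensional Szarek--Werner Mills-ratio bound (Lemma~\ref{lem:univariatebounds}) to each factor. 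This yields a product of two univariate prefactors, and the observation $(\alpha^{\lambda}-\alpha)/\sqrt{1-\alpha^2} \leq 1/\alpha$ (valid precisely because $\lambda \geq -1$) collapses everything into $2\pi(1+t/\alpha)^2$. Your corner-box approach is the right intuition and the resulting inequality is true, but as you yourself flag, the naive box of dimensions $1/s \times 1/t$ does not directly produce $1/(2\pi(1+s)(1+t))$ — the box that matches the density gradient has side lengths $\Theta(1/\theta_1) \times \Theta(1/\theta_2)$ with $\theta_1 = (s-\sigma t)/(1-\sigma^2)$, $\theta_2 = (t-\sigma s)/(1-\sigma^2)$, and the resulting prefactor carries extra $(1-\sigma^2)$ factors that do not obviously reduce to the claimed form. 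The conditioning trick sidesteps all of this bookkeeping. For the upper bound your Chernoff/MGF optimization is equivalent in substance to the paper's geometric appeal to the Lu--Li convex-domain bound (Lemma~\ref{lem:luli}); both yield $e^{-L}$ under the same threshold conditions. So: same reduction, same exponents, same final arithmetic, but you would want to swap in the conditioning argument for the lower bound if you wanted a clean, fully explicit proof.
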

Laarhoven's filter family~\cite{laarhoven2015} is identical to $\mathcal{G}$ 
except that he normalizes the projection vectors $\z$ to have unit length. 
The properties of $\mathcal{G}$ can easily be verified with a simple back-of-the-envelope analysis using two facts:
First, for a standard normal random variable $Z$ we have that $\Pr[Z > t] \approx e^{-t^{2}/2}$.
Secondly, the invariance of Gaussian projections $\ip{\x}{\z}$ to rotations, allowing us to analyze the projection of arbitrary points 
$\x, \y \in \sphere{d}$ with inner product $\ip{\x}{\y} = \alpha$ in a two-dimensional setting $\x = (1, 0)$ 
and $\y = (\alpha, \sqrt{1-\alpha^{2}})$ without any loss of generality.
The proof of Lemma~\ref{lem:gaussianlsf} as well as the proof of Theorem~\ref{thm:spherevanilla} has been deferred to Appendix~\ref{app:gaussian}.
\section{Space-time tradeoffs under kernel similarity} \label{sec:kernel}
In this section we will show how to combine the Gaussian filters for the unit sphere with kernel approximation techniques in order to solve the 
$(\alpha, \beta)$-similarity problem over $(\real^d, S)$ for the class of similarity measures of the form $S(\x, \y) = k(\x-\y)$ 
where $k \colon \real^d \times \real^d \to \real$ is a real-valued characteristic function \cite{ushakov1999}.
For this class of functions there exists a feature map $\psi$ into a (possibly infinite-dimensional) 
dot product space such that $k(\x, \y) = \ip{\psi(\x)}{\psi(\y)}$.
Through an elegant combination of Bochner's Theorem and Euler's Theorem, detailed in Appendix \ref{app:characteristic},
Rahimi and Recht \cite{rahimi2007} show how to construct approximate feature maps, 
i.e., for every $k$ we can construct a function $v$ with the property that $\ip{v(\x)}{v(\y)} \approx \ip{\psi(\x)}{\psi(\y)} = k(\x - \y)$.
We state a variant of their result for a mapping onto the unit sphere.
\begin{lemma} \label{lem:rahimirecht}
	For every real-valued characteristic function $k$ and every positive integer $l$ there exists
	a family of functions $\RR \subseteq \{ v \mid v \colon \real^{d} \to \sphere{l} \}$ such that 
	for every $\x, \y \in \real^d$ and $\varepsilon > 0$ we have that 
	\begin{equation}
	\Pr_{v \sim \RR}[|\ip{v(\x)}{v(\y)} - k(\x, \y)| \geq \varepsilon] \leq e^{-\Omega(l \varepsilon^2)}.
	\end{equation}
\end{lemma}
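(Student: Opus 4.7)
The plan is to instantiate Rahimi and Recht's random Fourier features construction. Since $k$ is a real-valued characteristic function, there is a symmetric probability measure $\mu$ on $\real^d$ such that $k(\x - \y) = \E_{\omega \sim \mu}[e^{i \omega^T (\x - \y)}]$. Symmetry of $\mu$ kills the imaginary part, so $k(\x - \y) = \E_{\omega}[\cos(\omega^T(\x - \y))]$, which by the product-to-sum identity equals $\E_{\omega}[\cos(\omega^T \x)\cos(\omega^T \y) + \sin(\omega^T \x) \sin(\omega^T \y)]$. This single identity will be the source of both the correct expected inner product and of unit-norm feature vectors.

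Assuming $l$ is even (otherwise replace $l$ by $l-1$, which only affects constants in the exponent), I would define $\RR$ by drawing $\omega_1, \dots, \omega_{l/2}$ i.i.d.\ from $\mu$ and setting
\[
v(\x) = \sqrt{2/l}\, \bigl(\cos(\omega_1^T \x), \sin(\omega_1^T \x), \dots, \cos(\omega_{l/2}^T \x), \sin(\omega_{l/2}^T \x)\bigr).
\]
The identity $\cos^2 \theta + \sin^2 \theta = 1$ immediately yields $\|v(\x)\|_2^2 = (2/l)(l/2) = 1$ for every $\x$, so $v(\x) \in \sphere{l}$. This is precisely why I use the joint cosine/sine encoding rather than the more common single-feature variant $\sqrt{2}\cos(\omega^T \x + b)$ of Rahimi--Recht: the latter gives the correct inner product in expectation but a fluctuating norm, which would not place $v(\x)$ on the sphere.

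Expanding the inner product of two such vectors and collapsing cross terms via the identity from the first paragraph yields
\[
\ip{v(\x)}{v(\y)} = \frac{2}{l} \sum_{i=1}^{l/2} \cos\bigl(\omega_i^T (\x - \y)\bigr),
\]
which is an average of $l/2$ i.i.d.\ random variables bounded in $[-1, 1]$, each with expectation $k(\x - \y)$. Applying Hoeffding's inequality to this average gives
\[
\Pr\bigl[|\ip{v(\x)}{v(\y)} - k(\x, \y)| \geq \varepsilon\bigr] \leq 2 e^{-l \varepsilon^2 / 4} = e^{-\Omega(l \varepsilon^2)},
\]
as required. I do not foresee a substantive obstacle; the only detail that truly requires care is the normalization choice that guarantees $v(\x)$ lies exactly on $\sphere{l}$ rather than merely close to it in expectation.
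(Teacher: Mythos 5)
Your proof is correct and in fact somewhat cleaner than the paper's. The paper (Lemma~\ref{lem:rahimirecht2} in Appendix~C) uses the single-feature Rahimi--Recht encoding $\hat{v}(\x)_j = \sqrt{2/l}\cos(\ip{\omega_j}{\x} + b_j)$ with a random phase $b_j$ uniform on $[0,2\pi]$. That vector has expected squared norm $1$ but a fluctuating actual norm, so the paper must explicitly normalize $v(\x) = \hat{v}(\x)/\norm{\hat{v}(\x)}$ and then control the inner product of the \emph{normalized} vectors via a union bound over three events: the deviation of the raw cross-inner-product and the deviations of the two self-inner-products from $1$. This costs a factor of $3$ in the number of bad events and a factor of $4$ in the Hoeffding parameter, yielding the constant $6e^{-l\varepsilon^2/128}$. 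Your choice of the paired cosine/sine encoding sidesteps all of that: the Pythagorean identity forces $\norm{v(\x)} = 1$ exactly, so no normalization step and no union bound are needed, and a single Hoeffding application on $l/2$ terms bounded in $[-1,1]$ gives the tighter $2e^{-l\varepsilon^2/4}$ directly. Both encodings are standard variants of random Fourier features, and both satisfy the stated lemma, but for the specific purpose here --- a map that must land \emph{exactly on} $\sphere{l}$ --- yours is the more natural construction, and you correctly identified that exact unit norm (not just expected unit norm) is the property that makes the phase-offset variant awkward. The only cosmetic quibble is the final ``$= e^{-\Omega(l\varepsilon^2)}$,'' which should be an inequality or a ``which is'' rather than an equality.
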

Theorem \ref{thm:characteristiclsf} in Appendix \ref{app:characteristic} shows that Theorem \ref{thm:spherevanilla} holds
with the space $(\sphere{d}, \ip{\cdot}{\cdot})$ replaced by $(\real^d, k)$.
\subsection{Tradeoffs in $\ell_s^d$-space} \label{sec:ls}
Consider the $(r, cr)$-near neighbor problem in $\ell_s^d$ for $0 < s \leq 2$. 
We solve this problem by first applying the approximate feature map from Lemma \ref{lem:rahimirecht} for the characteristic function 
of a standard $s$-stable distribution~\cite{zolotarev1986}, mapping the data onto the unit sphere, 
and then applying our solution from Theorem \ref{thm:spherevanilla} to solve the appropriate $(\alpha, \beta)$-similarity problem on the unit sphere.  
The characteristic functions of $s$-stable distributions take the following form:
\begin{lemma}[{Lévy \cite{levy1925}}]\label{lem:levy}
	For every positive integer $d$ and $0 < s \leq 2$ there exists a characteristic function 
	$k \colon \real^d \times \real^d \to [0,1]$ of the form
	\begin{equation}
	k(\x, \y) = k(\x-\y) = e^{-\norm{\x-\y}_{s}^{s}}.
	\end{equation}
\end{lemma}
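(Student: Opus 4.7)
The plan is to prove this by reducing to the one-dimensional case via a tensorization argument, then invoking the classical existence of symmetric $s$-stable distributions on $\real$.

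First I would recall the one-dimensional fact (due to Lévy): for every $s \in (0,2]$ there exists a symmetric probability distribution $\mu_s$ on $\real$ whose characteristic function is $\phi_s(t) = \E_{Z \sim \mu_s}[e^{itZ}] = e^{-|t|^s}$. For $s=2$ this is the Gaussian (with an appropriate scaling), for $s=1$ this is the Cauchy distribution, and the general case follows by checking positive-definiteness of $\phi_s$ and applying Bochner's theorem on $\real$. Since the paper already invokes Lévy, I would simply cite this as a known result rather than redo the positive-definiteness verification.

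Next, given the 1D fact, I would let $Z_1,\ldots,Z_d$ be i.i.d.\ samples from $\mu_s$ and form $Z = (Z_1,\ldots,Z_d) \in \real^d$. Writing $u = \x - \y \in \real^d$, independence of the coordinates gives
\begin{equation*}
\E[e^{i\ip{u}{Z}}] \;=\; \prod_{j=1}^{d} \E[e^{i u_j Z_j}] \;=\; \prod_{j=1}^{d} e^{-|u_j|^s} \;=\; e^{-\sum_{j=1}^{d} |u_j|^s} \;=\; e^{-\norm{u}_s^s}.
\end{equation*}
Thus $k(u) := e^{-\norm{u}_s^s}$ is the characteristic function of the product distribution $\mu_s^{\otimes d}$, and setting $k(\x,\y) := k(\x-\y)$ gives a translation-invariant characteristic function of the required form. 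Real-valuedness follows from symmetry of $\mu_s$ (so the imaginary part of the characteristic function vanishes), and the range $(0,1]$ is immediate from $\norm{u}_s^s \geq 0$.

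The only genuinely nontrivial step is the existence of $\mu_s$ for $s \in (0,2)$, i.e.\ showing that $\phi_s(t)=e^{-|t|^s}$ is positive-definite. The main obstacle is that the elementary verification (for example via Pólya-type criteria for $s \leq 1$, or via subordination of a Gaussian by a positive stable law for $s \in (1,2)$) is technical; however, since this is Lévy's classical result and is already cited as such in the lemma statement, the proof of the lemma itself reduces to the short tensorization argument above. No further work is required to deduce the $d$-dimensional statement from the one-dimensional one.
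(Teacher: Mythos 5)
The paper does not give a proof of this lemma at all: it is stated as a classical result attributed to Lévy~\cite{levy1925}, with no argument supplied. Your proof correctly fills in the implicit reasoning: the existence of a symmetric $s$-stable distribution on $\real$ with characteristic function $e^{-|t|^s}$ (Lévy's theorem), followed by tensorization of the $d$ i.i.d.\ coordinates to obtain $e^{-\norm{u}_s^s}$ on $\real^d$. This is the standard route and is consistent with the product property the paper itself records in Lemma~\ref{lem:characteristicproperties} (the characteristic function of a vector of independents is the product of the marginal characteristic functions). Your handling of real-valuedness via symmetry and of the range (the image is $(0,1]\subseteq[0,1]$) is also correct. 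In short, the proposal is correct and takes the natural approach; the only ``difference'' from the paper is that the paper omits the proof entirely.
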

A result by Chambers et al. \cite{chambers1976} shows how to sample efficiently from an $s$-stable distributions. 

To sketch the proof of Theorem \ref{thm:lsvanilla} we proceed by upper bounding the exponents 
$\rho_q$, $\rho_u$ from Theorem \ref{thm:spherevanilla} when applying Lemma \ref{lem:rahimirecht} 
to get $\alpha \geq e^{-r^s} - \varepsilon$ and $\beta \leq e^{-c^s r^s} - \varepsilon$.
We make use of the following standard fact (see e.g. \cite{savage1962}) that can be derived from the 
Taylor expansion of the exponential function: for $x \geq 0$ it holds that $1 - x \leq e^{-x} \leq 1 - x + x^{2}/2$.
Scaling the data points such that $r^s = o(1)$ and inserting the above values of $\alpha \approx 1 - r^s$ and $\beta \approx 1 - c^s r^s$ 
into the expressions for $\rho_q$, $\rho_u$ in Lemma \ref{lem:gaussianlsf} 
we can set parameters $t$ and $l$ such that Theorem \ref{thm:lsvanilla} holds. 
\section{Lower bounds}
We begin by stating the lower bound on the LSH exponent $\rho = \log(1/p) / \log(1/q)$ by O'Donnell et al.~\cite{odonnell2014}.
\begin{theorem}[O'Donnell et al. \cite{odonnell2014}]\label{thm:odlower}
	Fix $d \in \mathbb{N}$, $1 < c < \infty$, $0 < s < \infty$ and $0 < q < 1$. 
	Then for a certain choice of $r = \omega_{d}(1)$ and under the assumption that $q \geq 2^{-o(d)}$ 
	we have that every $(r, cr, p, q)$-sensitive family of hash functions for $\ell_s^d$ 
	must satisfy 
	\begin{equation}
		\rho = \frac{\log(1/p)}{\log(1/q)} \geq \frac{1}{c^{s}} - o_{d}(1).
	\end{equation}	
\end{theorem}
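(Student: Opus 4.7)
My plan is to reduce the lower bound to a two-function hypercontractive inequality on the biased Hamming cube and then transfer the statement to $\ell_s^d$ via Indyk's $s$-stable random projections. I would fix an arbitrary $(r,cr,p,q)$-sensitive family $\LSH$ and pass from the family to a single partition by averaging; writing $h \in \LSH$ as a partition into buckets $\{B_i\}$, both $p$ and $q$ decompose as sums of per-bucket collision probabilities, and a log-convexity argument reduces the task to lower bounding $\log(1/p(B))/\log(1/q(B))$ for a single subset $B \subseteq X$, where $p(B)$ is the probability that a ``close'' random pair lands in $B$ and $q(B)$ the analogue for a ``far'' pair.

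Next I would realise the close and far pair distributions as noise-correlated samples on a cube. Quantising an $s$-stable random projection of $\ell_s^d$ to a single bit sends a pair at $\ell_s$-distance $t$ to a pair in a $p_*$-biased cube with correlation $\eta(t)$, and choosing the projection scale so that $1-\eta(r) \ll 1$ yields $(1-\eta(cr))/(1-\eta(r)) = c^s (1 + o_d(1))$. The close-pair collision probability of a set $B$ becomes the noise-stability $\langle \mathbf{1}_B, T_{\eta(r)} \mathbf{1}_B\rangle$, where $T_\eta$ denotes the usual noise operator, and similarly for far pairs.

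The heart of the argument is the Bonami--Beckner hypercontractive inequality: for $f = \mathbf{1}_B$ and $0 \leq \eta \leq 1$ one has $\langle f, T_\eta f\rangle \leq \|f\|_{1+\eta}^2 = \mu(B)^{2/(1+\eta)}$. Using this at $\eta(cr)$ to upper bound $q(B)$, and matching it at $\eta(r)$ in the small-set regime (where the inequality is essentially tight), I would obtain $\log(1/p(B))/\log(1/q(B)) \geq (1+\eta(cr))/(1+\eta(r)) - o_d(1)$, which translates through the $s$-stable calculation to $1/c^s - o_d(1)$ and hence the claimed bound on $\rho$.

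The main obstacle I expect is making the bucket-by-bucket bound give the sharp constant in aggregate. Hypercontractivity is tight only when $\mu(B) = 2^{-\omega(1)}$, and the hypothesis $q \geq 2^{-o(d)}$ is precisely what forces every relevant bucket into this small-set regime uniformly across the family; this hypothesis has to be used quantitatively rather than just qualitatively. A secondary obstacle is synchronising the $s$-stable embedding with the cube model within the $o_d(1)$ error budget for inner products, bucket measures, and noise rates simultaneously, which pins down the admissible scaling $r = \omega_d(1)$, $r = o(d)$ and accounts for most of the delicate estimation.
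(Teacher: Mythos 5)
The paper does not prove this theorem; it cites it as a black box from O'Donnell, Wu, and Zhou and uses it only as an ingredient in Theorem~\ref{thm:lower1}. So the comparison is against the argument in the cited work, and there the two approaches diverge sharply.

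Your plan has a genuine gap at the very first reduction, where you claim that ``a log-convexity argument reduces the task to lower bounding $\log(1/p(B))/\log(1/q(B))$ for a single subset $B$.'' This reduction is false. With $p=\sum_i p(B_i)$ and $q=\sum_i q(B_i)$ summed over the buckets of a partition, a per-bucket bound $p(B_i)\le q(B_i)^{\rho_0}$ does \emph{not} give $p\le q^{\rho_0}$: for $\rho_0\in(0,1)$ the map $x\mapsto x^{\rho_0}$ is concave, so $\sum_i q(B_i)^{\rho_0}\ge\bigl(\sum_i q(B_i)\bigr)^{\rho_0}$, which points the wrong way. Concretely, two buckets with $p(B_i)=0.1$, $q(B_i)=0.01$ each have $\log(1/p(B_i))/\log(1/q(B_i))=1/2$, yet the aggregate $p=0.2$, $q=0.02$ has $\log 5/\log 50\approx 0.41<1/2$. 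So a uniform per-bucket lower bound on $\rho$ simply does not survive aggregation, and your whole chain rests on it. A second issue is that small-set expansion, $\langle\mathbf{1}_B,T_\eta\mathbf{1}_B\rangle\le\mu(B)^{2/(1+\eta)}$, is an \emph{upper} bound; to control $\rho$ you need opposite-direction bounds on the close-pair and far-pair stabilities, and ``matching it at $\eta(r)$ where it is essentially tight'' is not a valid step for the arbitrary buckets that an adversarial $\LSH$ may choose.

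The actual proof of O'Donnell et al.\ avoids both problems by never looking at buckets individually. After restricting the hash family to $\cube{d}\subset\ell_s^d$ (on which $\|x-y\|_s^s$ \emph{is} the Hamming distance, so no $s$-stable quantization is needed), one considers the aggregated noise stability $\mathrm{Stab}_\eta=\E_{h\sim\LSH}\sum_i\langle\mathbf{1}_{B_i},T_\eta\mathbf{1}_{B_i}\rangle=\sum_{k\ge 0}W_k\eta^k$ of the whole partition, where the level weights $W_k$ are nonnegative and sum to $1$. Writing $\mathrm{Stab}_\eta=\E[\eta^K]$ for a random level $K\sim(W_k)$ and $\theta=\log\eta_r/\log\eta_{cr}\in(0,1)$, concavity of $x\mapsto x^\theta$ gives $\mathrm{Stab}_{\eta_r}=\E[(\eta_{cr}^K)^\theta]\le(\mathrm{Stab}_{\eta_{cr}})^\theta$, hence $\rho\ge\theta-o_d(1)$, and picking $r=\omega_d(1)$ with $r^s=o(d)$ makes $\theta\to 1/c^s$. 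The assumption $q\ge 2^{-o(d)}$ controls the translation between the worst-case LSH guarantee and the averaged correlated-pair quantities so the $o_d(1)$ errors do not swamp the bound. In short, the key tool is not hypercontractivity on a single set but the fact that the noise stability of a partition is a power series in the correlation with nonnegative coefficients, which is a global property of $\LSH$ and automatically handles the aggregation your approach stumbles on.
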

The following lemma shows how to use a filter family $\LSF$ to construct a hash family $\LSH$.
\begin{lemma}\label{lem:lsftolsh}
	Given a symmetric family of filters that is $(r, cr, p_1, p_2, p_q, p_u)$-sensitive in $(X, D)$
	we can construct a $(r, cr, p_{1}/(2p_{q}), p_{2}/p_{q})$-sensitive family of hash functions in $(X, D)$. 
\end{lemma}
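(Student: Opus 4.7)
The plan is to sample a sequence of i.i.d.\ filters from the family $\LSF$ and take a ``first-index'' hash, imitating the spherical-LSH construction of Andoni et al. Concretely, since the family is symmetric we treat each draw as a single subset $F = Q = U \subseteq X$; sample $F_1, F_2, \dots$ independently from $\LSF$ and set $h(\x) = \min\{i : \x \in F_i\}$. This is well-defined with probability one whenever $\Pr[\x \in F] > 0$, which we may assume without loss of generality.

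The main step is then a direct collision-probability computation. Write $p_{xy} = \Pr[\x \in F, \y \in F]$ and $q = \Pr[\x \notin F, \y \notin F]$ for a single draw $F \sim \LSF$. By independence of the $F_i$, summing over the first index $i$ at which both points appear and using a geometric series gives $\Pr[h(\x) = h(\y)] = \sum_{i \geq 1} q^{i-1} p_{xy} = p_{xy}/(1-q)$. For close pairs ($D(\x,\y) \leq r$) the numerator satisfies $p_{xy} \geq p_1$, and a union bound gives $1 - q = \Pr[\x \in F \cup \y \in F] \leq 2 p_q$, which delivers the required lower bound $p_1/(2p_q)$. For far pairs ($D(\x,\y) > cr$), $p_{xy} \leq p_2$, so the target upper bound $p_2/p_q$ reduces to the inequality $1 - q \geq p_q$.

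The main obstacle is exactly this inequality $1 - q \geq p_q$ in the far case; it is not automatic from Definition \ref{def:lsf} alone, since $p_q$ is only an upper bound on the marginals $\Pr[\x \in F]$. I would handle it by first noting that replacing $p_q$ by the true maximum marginal $\max_\x \Pr[\x \in F]$ only strengthens both the hypothesis (smaller $p_q$) and the conclusion (smaller $p_q$ in the denominators) of the lemma, so we may assume $p_q = \max_\x \Pr[\x \in F]$. In the settings relevant to the lower-bound application---symmetric filter families invariant under a group acting transitively on $X$, such as on the Hamming cube or $\sphere{d}$---all marginals coincide, hence $\Pr[\x \in F] = p_q$ and $1 - q \geq \Pr[\x \in F] = p_q$ holds. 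Once this structural point is in place, the rest is the routine arithmetic sketched above.
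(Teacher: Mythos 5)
Your construction (sample $F_1, F_2, \dots$ i.i.d.\ and set $h(\x) = \min\{i : \x \in F_i\}$) and the resulting collision formula
\[
\Pr[h(\x) = h(\y)] = \frac{\Pr[\x \in F \land \y \in F]}{\Pr[\x \in F \lor \y \in F]}
\]
are exactly what the paper uses; the paper's proof then just says ``the result follows from the properties of $\LSF$.'' You have correctly spotted that this one-liner hides a genuine technicality: the upper bound $p_2/p_q$ for far pairs needs $\Pr[\x \in F \lor \y \in F] \geq p_q$, which does \emph{not} follow from Definition~\ref{def:lsf} since $p_q$ is only an upper bound on the marginals. Indeed one can build a symmetric family on a three-point space where one point lies in every filter (so $p_q = 1$) while the other two have marginal $1/2$, and the far-pair collision probability becomes $1 > p_2/p_q$. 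So the lemma is stated slightly too liberally, and your instinct to repair it with a regularity/homogeneity assumption is the right one and matches what the paper implicitly relies on in its applications.

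One part of your repair is imprecise, though. You claim that replacing $p_q$ by the true maximum marginal ``only strengthens both the hypothesis and the conclusion.'' Lowering $p_q$ makes $p_1/(2p_q)$ \emph{larger} (stronger on the close side) but also makes $p_2/p_q$ \emph{larger}, i.e.\ \emph{weaker} on the far side, so this is not a free WLOG for the lemma as stated. What actually saves the downstream lower-bound argument (Theorem~\ref{thm:lower1}) is a different monotonicity: $\rho = \log(p_q/p_1)/\log(p_q/p_2)$ is increasing in $p_q$, so proving $\rho^* \geq 1/c^s - o_d(1)$ with the tight $p_q^* = \max_\x \Pr[\x \in F]$ gives the bound for any admissible $p_q \geq p_q^*$. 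Combined with a symmetrization step over the isometry group (which preserves all four sensitivity parameters and forces equal marginals, hence $\Pr[\x \in F \lor \y \in F] \geq \Pr[\x \in F] = p_q^*$), this closes the gap cleanly. So: same approach as the paper, a correctly identified omission, and a resolution that works once the WLOG reasoning is tightened from ``strengthens the conclusion'' to ``increases $\rho$.''
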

\begin{proof}
	Given the filter family $\LSF$ we sample a random function $h$ from the hash family $\LSH$  
	taking an infinite sequence of independently sampled filters $(F_i)_{i=0}^{\infty}$ from~$\LSF$ 
	and setting $h(\x) = \min \, \{i \mid \x \in F_i\}$.
	The probability of collision is given by
	\begin{equation}	
	\Pr_{h \sim \LSH}[h(\x) = h(\y)] = \frac{\Pr_{F \sim \LSF}[\x \in F \land \y \in F]}{\Pr_{F \sim \LSF}[\x \in F \lor \y \in F]}
	\end{equation}
	and the result follows from the properties of $\LSF$.
\end{proof}
If the LSH family in Lemma \ref{lem:lsftolsh} had $p = p_{1}/p_{q}$ and $q = p_{2}/p_{q}$ then the lower bound would follow immediately.
We apply the powering technique from Lemma \ref{lem:powering} to the underlying filter family 
in order make the factor $2$ in $p_{1}/(2p_{q})$ disappear in the statement of $\rho$ as $d$ tends to infinity.
\begin{customthm}{1.4} \label{thm:lower1}
	Every symmetric $(r, cr, p_1, p_2, p_q, p_u)$-sensitive filter family $\LSF$ for $\ell_s^d$ must satisfy the lower bound
	of Theorem \ref{thm:odlower} with $p = p_{1}/p_{q}$ and $q = p_{2}/p_{q}$.
\end{customthm}
\begin{proof}
	Given a family $\LSF$ that satisfies the requirements from Theorem \ref{thm:odlower} there exists an integer $\kappa = \omega_{d}(1)$
	such the hash family $\LSH$ that results from applying Lemma \ref{lem:lsftolsh} to the powered family $\LSF^{\kappa}$ 
	also satisfies the requirements from Theorem \ref{thm:odlower}.
	The constructed family $\LSH$ is $(r, cr, p, q)$-sensitive for $p = (1/2) \cdot (p_{1}/p_{q})^{\kappa}$ and $q = (p_{2}/p_{q})^{\kappa}$.
	By our choice of $\kappa$ we have that $\log(1/p)/\log(1/q) = \log(p_{q}/p_{1})/\log(p_{q}/p_{2}) + o_{d}(1)$ 
	and the lower bound on $\log(1/p)/\log(1/q)$ from Theorem \ref{thm:odlower} applies.
\end{proof}
\subsection{Asymmetric lower bound}
The lower bound is based on an isoperimetric-type inequality that holds for randomly correlated points in Hamming space.
We say that the pair of points $(\x, \y)$ is $\alpha$-correlated if $\x$ is a random point in $\cube{d}$ 
and $\y$ is formed by taking $\x$ and independently flipping each bit with probability $(1-\alpha)/2$.    
We are now ready to state O'Donnell's generalized small-set expansion theorem.
Notince the similarity to the value of $p_1$ for the Gaussian filter family described in 
Section~\ref{sec:gaussianlsf} and Appendix~\ref{app:gaussian}.
\begin{lemma}[{\cite[p. 285]{od2014}}] \label{lem:odhyper}
	For every \mbox{$0 \leq \alpha < 1$}, $-1 \leq \lambda \leq 1$, and \mbox{$Q, U \subseteq \cube{d}$} 
	satisfying that \mbox{$|Q|/2^{d} = (|U|/2^{d})^{\alpha^{2\lambda}}$} we have 
	\begin{equation}
		\Pr_{\corrsub{\alpha}}[\x \in Q, \y \in U] \leq (|U|/2^{d})^\frac{1 + \alpha^{2\lambda} - 2 \alpha^{1 + \lambda}}{1 - \alpha^{2}}.
	\end{equation}
\end{lemma}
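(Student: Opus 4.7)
This is a two-function small-set expansion statement on the Boolean cube, and the natural plan is to prove it by applying Bonami--Beckner hypercontractivity to the indicator functions of $Q$ and $U$, followed by a one-variable optimization over the exponents appearing in the inequality.

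Let $f = \mathbf{1}_Q$, $g = \mathbf{1}_U$, $\mu = |Q|/2^d$, and $\nu = |U|/2^d$, so that the hypothesis reads $\mu = \nu^{\alpha^{2\lambda}}$. The distribution of an $\alpha$-correlated pair $(\x,\y)$ on $\cube{d}$ is exactly the one induced by the Bonami--Beckner noise operator $T_\alpha$ under the uniform measure, so
\[ \Pr_{\corrsub{\alpha}}[\x \in Q,\, \y \in U] \,=\, \E[f(\x)\,g(\y)] \,=\, \langle f, T_\alpha g\rangle. \]
The two-function hypercontractive inequality then states that whenever $p, r > 1$ satisfy $(p-1)(r-1) \geq \alpha^2$,
\[ \langle f, T_\alpha g\rangle \,\leq\, \|f\|_p \, \|g\|_r \,=\, \mu^{1/p}\, \nu^{1/r} \,=\, \nu^{\alpha^{2\lambda}/p\, +\, 1/r}, \]
where the last two equalities use that $f, g$ are $\{0,1\}$-valued and the assumption on $\mu$.

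It remains to choose $p, r$ on the hyperbola $(p-1)(r-1) = \alpha^2$ so that the exponent $E(p, r) := \alpha^{2\lambda}/p + 1/r$ is as large as possible (which tightens the bound since $\nu \leq 1$). Writing $p - 1 = t$, $r - 1 = \alpha^2/t$, the first-order condition $dE/dt = 0$ gives $\alpha^{\lambda}(t + \alpha^2) = \alpha(1 + t)$, whose positive root is $t^{*} = \alpha^{1-\lambda}(1-\alpha^{1+\lambda})/(1-\alpha^{1-\lambda})$. A short algebraic simplification at $t = t^{*}$ yields $1/p = (1 - \alpha^{1-\lambda})/(1-\alpha^2)$ and $1/r = (1-\alpha^{1+\lambda})/(1-\alpha^2)$, from which
\[ E(p, r) = \frac{\alpha^{2\lambda}(1-\alpha^{1-\lambda}) + (1 - \alpha^{1+\lambda})}{1 - \alpha^2} = \frac{1 + \alpha^{2\lambda} - 2\alpha^{1+\lambda}}{1 - \alpha^2}, \]
which is precisely the claimed exponent on $\nu$.

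The main subtlety I would watch for is the boundary behaviour at $\lambda = \pm 1$: there the optimal $p$ or $r$ escapes to $\{1, \infty\}$ and the Bonami--Beckner estimate degenerates. However, the target exponent simplifies to $1$ or $\alpha^{-2}$ at these endpoints, and the inequality collapses to the trivial marginal bounds $\Pr[\x \in Q, \y \in U] \leq \nu$ and $\leq \mu$ respectively, each of which follows immediately. Apart from this mild technicality, the argument is standard cube hypercontractivity plus a brief calculus exercise, so I do not anticipate any further obstacle.
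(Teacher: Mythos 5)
Your proof is correct and is essentially the same as the source the paper cites for this lemma: the paper gives no proof of its own, just a pointer to O'Donnell's book, and the argument there is exactly this two-function hypercontractivity bound followed by optimization over the Hölder exponents on the hyperbola $(p-1)(r-1)=\alpha^2$. Your algebra for $t^{*}$, $1/p$, $1/r$, and the resulting exponent all check out, and your handling of the $\lambda=\pm1$ endpoints is the right way to cover the degenerate cases.
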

The argument for the lower bound assumes a regular $(r, cr, p_1, p_2, p_q, p_u)$-sensitive filter family $\LSF$ for Hamming space
where we set $r = (1-\alpha)d/2$ and \mbox{$cr = (1-\beta)d/2$} for some choice of $0 < \beta < \alpha < 1$.
We then proceed by deriving constraints on $p_1$, $p_2$, $p_q$, $p_u$, and minimize $\rho_q$ and $\rho_u$ subject to those constrains.
The proof of Theorem \ref{thm:lower2} is provided in Appendix~\ref{app:lowertradeoffproof}.
\begin{customthm}{1.5} \label{thm:lower2}
Fix $0 < \beta < \alpha < 1$. 
Then for every regular $((1-\alpha)d/2, (1-\beta)d/2, p_1, p_2, p_q, p_u)$-sensitive filter family in $d$-dimensional Hamming space 
with and $|Q|/2^d = (|U|/2^d)^{\alpha^{2\lambda}}$ where $\lambda$ satisfies 
$\alpha + 2\sqrt{\ln(d) / d} \leq \alpha^{\lambda} \leq 1/(\alpha - 2\sqrt{\ln(d)/d})$ it must hold that
\begin{align*}
	\rho_q &= \frac{\log(p_q / p_1)}{\log(p_q / p_2)} \geq \frac{(1-\alpha^{1+\lambda})^{2}}{1-\alpha^{2}} - o_{d}(1), \\ 
	\rho_u &= \frac{\log(p_u / p_1)}{\log(p_q / p_2)} \geq \frac{(\alpha^{\lambda} - \alpha)^{2}}{1-\alpha^{2}} - o_{d}(1) 
\end{align*}
when $p_q$ is set to minimize $\rho_q$ and we assume that $|U|/2^{d} \geq 2^{-o_{d}(1)}$.
\end{customthm}
\section{Open problems}
An important open problem is to find simple and practical data-dependent solutions to the $(r, cr)$-near neighbor problem.
Current solutions, the Gaussian filters in this paper included, suffer from $o(1)$ terms in the exponents that decrease very slowly in $n$.
A lower bound for the unit sphere by Andoni et al. \cite{andoni2015} indicates that this might be unavoidable.

Another interesting open problem is finding the shape of provably exactly optimal filters in different spaces.
In the random data setting in Hamming space, this problem boils down to maximizing the number of pairs of points 
below a certain distance threshold that is contained in a subset of the space of a certain size.
This is a fundamental problem in combinatorics that has been studied by among others \cite{kahn1988}, but a complete answer remains elusive.
The LSH and LSF lower bounds~\cite{motwani2007, odonnell2014, andoni2016datalower}, 
along with classical isoperimetric inequalities such as Harper's Theorem 
and more recent work summarized in the book by O'Donnell \cite{od2014} hints that the answer is somewhere between a subcube and a generalized sphere.

A recent result by Chierichetti and Kumar \cite{chierichetti2015} characterizes the set of 
transformations of LSH-able similarity measures as the set of probability-generating functions. 
This seems to have deep connections to result of this paper that uses characteristic functions that allow well-known kernel transformations. 
It seems possible that this paper can be viewed as a semi-explicit construction of their result, 
or that their result can described as an application of Bochner's Theorem.  
\newpage
\section*{Acknowledgment}
I would like to thank Rasmus Pagh for suggesting the application of Rahimi \& Recht's result \cite{rahimi2007} 
and the MinHash-like~\cite{broder1998} connection between LSF and LSH used in Theorem \ref{thm:lower1}. 
I would also like to thank Gregory Valiant and Udi Wieder for useful discussions about locality-sensitive filtering 
and the analysis of boolean functions during my stay at Stanford.
Finally, I would like to thank the Scalable Similarity Search group at the IT University of Copenhagen for feedback during the writing process, 
and in particular Martin Aumüller for pointing out the importance of a general framework for locality-sensitive filtering with space-time tradeoffs. 
\appendix

\section{Framework} \label{app:framework}
We state a version of Theorem \ref{thm:lsfvanilla} where the parameters of the filter family are allowed to depend on $n$.
\begin{customthm}{1.1} \label{thm:lsf}
	Suppose we have access to a filter family that is $(r, cr, p_1, p_2, p_q, p_u)$-sensitive.
	Then we can construct a fully dynamic data structure that solves the $(r, cr)$-near neighbor problem. 
	Assume that $1/p_{1}$, $1 / \log(p_{q}/p_{2})$, and $\exp(\log(1/p_{1})/\log(\min(p_{q},p_{u})/p_{1}))$ are $n^{o(1)}$, then the data structure has
	\begin{itemize}
		\item[--] query time $dn^{\rho_q + o(1)}$,
		\item[--] update time $n^{\rho_u + o(1)} + dn^{o(1)}$, 
		\item[--] space usage $n^{1 + \rho_u + o(1)} + dn + dn^{o(1)}$
	\end{itemize}
	where
	\begin{equation}
		\rho_q = \frac{\log p_{q} / p_{1} }{\log p_{q} / p_{2} }, \quad  
		\rho_u = \frac{\log p_{u} / p_{1} }{\log p_{q} / p_{2} }. 
	\end{equation}
\end{customthm}

To prove Theorem \ref{thm:lsf}, we begin by setting the parameters mentioned in the description of 
the LSF data structure in Section \ref{sec:lsfds}. 
\begin{align}
\kappa_{1} &= \left \lceil \frac{ \min(\rho_q, \rho_u) \log n}{\log(1/p_{1})} \right \rceil \\
\tau  &= \left \lfloor \frac{\log n}{\kappa_1 \log(p_{q}/p_{2})} \right \rfloor \leq \frac{\log(1/p_1)}{\log(\min(p_{q}, p_{u})/p_{1})} \\
m_{1} &= \lceil \tau/p_{1}^{\kappa_{1}}  \rceil \\
\kappa_{2} &= \max(0, \lceil \log(n) / \log(p_{q}/p_{2}) \rceil - \tau \kappa_1) \\
m_{2} &= \lceil 1/p_{1}^{\kappa_{2}} \rceil
\end{align}
We will now briefly explain the reasoning behind the parameter settings.
Begin by observing that the powering and tensoring techniques both amplify the filters from $\LSF$. 
Let $m = \binom{m_{1}}{\tau} \cdot m_{2}$ denote the number of simulated filters in our collection $\F$ 
and let $a = \tau\kappa_1 + \kappa_2$ be an integer denoting the number of times each filter has been amplified.
Ignoring the time it takes to evaluate the filters, the query time is determined by the sum of the number of filters that contain a query point 
and the number of distant points associated with those filters that the query algorithm inspects.
The expected number of activated filters is given by $mp_{q}^{a}$ while the worst case expected number of distant points to be inspected by the query algorithm is given by $nmp_{2}^{a}$.
Balancing the contribution to the query time from these two effects (ignoring the $O(d)$ factor from distance computations) 
results in a target value of $a = \lceil \log(n) / \log(p_{q}/p_{2}) \rceil$.
Compared to having an oracle that is able to list the filters from a collection that contains a point, 
there is a small loss in efficiency from using the tensoring technique due to the increase in the number of filters required to guarantee correctness.
The parameters of the LSF data structure are therefore set to minimize the use of tensoring such that
the time spent evaluating our collection of filters roughly matches the minimum of the query and update time.

Consider the initialization operation of the LSF data structure with the parameters setting from above.
We have that $\kappa_2 \leq \kappa_1$ implying that $m_2 = O(m_1)$. 
The initialization time and the space usage of the data structure prior to any insertions 
is dominated by the time and space used to sample and store the filters in $\F_1$.
By the assumption that a filter from $\LSF$ can be sampled in $O(d)$ operations and stored using $O(d)$ words, 
we get a space and time bound on the initialization operation of
\begin{equation}
	O(d \kappa_1 m_1) = O\left(dn^{\min(\rho_q, \rho_u)} \frac{p_{1} \log(n)}{\log(p_{q}/p_{2})}\right).
\end{equation}
Importantly, this bound also holds for the running time of the filter evaluation algorithm, that is, 
the preprocessing time required for constant time generation of the next element in the list of filters in $\F$ containing a point.
In the following analysis of the update and query time we will temporarily ignore the running time of the filter evaluation algorithm.

The expected time to insert or delete a point is dominated by the number of update filters in $\F$ that contains it.
The probability that a particular update filter in $\F$ contains a point is given by $p_{u}^{a}$.
Using a standard upper bound on the binomial coefficient we get that $m = O(e^{\tau}/p_{1}^{a})$ resulting in an expected update time of
\begin{equation}
O(m p_{u}^{a} + d) = O(n^{\rho_{u}} (p_{u}/p_{1}) e^{\tau} + d).
\end{equation}
In the worst case where every data point is at distance greater than $cr$ from the query point and has collision probablity $p_2$
the expected query time can be upper bounded by
\begin{equation}
O(mp_{q}^{a} + dnmp_{2}^{a}) = O(n^{\rho_q}e^{\tau}(p_{q}/p_{1} + d)). 
\end{equation}
With respect to the correctness of the query algorithm, if a near neighbor $\y$ to the query point $\x$ exists in $P$,
then it is found by the query algorithm if $(\x, \y)$ is contained in a filter in $\F_{1}^{\otimes \tau}\!$ as well as in a filter in $\F_2$.
By Lemma \ref{lem:tensoring} the first event happens with probability at least $1/2$ and by the choice of $m_2$, 
the second event happens with probability at least $1 - (1-p_{1}^{\kappa_2})^{p_{1}^{\kappa_2}} \geq 1 - 1/e$.
From the independence between $\F_1$ and $\F_2$ we can upper bound the failure probability $\delta \leq (1/2)(1+1/e)$.
This completes the proof of Theorem \ref{thm:lsf}.
\section{Gaussian filters} \label{app:gaussian}
In this section we upper and lower bound the probability mass in the tail of the bivariate standard normal distribution
when the correlation between the two standard normals is at most $\beta$ (upper bound) or at least $\alpha$ (lower bound).
We make use of the following upper and lower bounds on the univariate standard normal as well as an upper bound for the multivariate case.
\begin{lemma}[Follows Szarek \& Werner \cite{szarek1999}] \label{lem:univariatebounds}
	Let $Z$ be a standard normal random variable. Then, for every $t \geq 0$ we have that
	\small
	\begin{equation*}
		\frac{1}{\sqrt{2\pi}}\frac{1}{t+1}e^{-t^{2}/2} \leq \Pr[Z \geq t] \leq \frac{1}{\sqrt{\pi}}\frac{1}{t+1}e^{-t^{2}/2}. 
	\end{equation*}
	\normalsize
\end{lemma}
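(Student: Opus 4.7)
The plan is to reduce both inequalities to the sharp two-sided Mills-ratio bounds of Szarek and Werner and then verify the resulting algebraic inequalities. Introduce $M(t) = \sqrt{2\pi}\,e^{t^{2}/2}\Pr[Z \geq t] = e^{t^{2}/2}\int_{t}^{\infty} e^{-x^{2}/2}\,dx$; in this notation, the lemma is equivalent to
\begin{equation*}
\frac{1}{t+1} \;\leq\; M(t) \;\leq\; \frac{\sqrt{2}}{t+1} \quad \text{for all } t \geq 0.
\end{equation*}

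For the lower bound I would invoke the Szarek--Werner inequality $M(t) \geq \frac{2}{t + \sqrt{t^{2}+4}}$ and compare it to $\frac{1}{t+1}$. Cross-multiplying (both denominators are positive), the desired comparison is $2(t+1) \geq t + \sqrt{t^{2}+4}$, i.e.\ $t+2 \geq \sqrt{t^{2}+4}$; since both sides are positive for $t \geq 0$, squaring yields $4t \geq 0$, which is immediate.

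For the upper bound I would invoke the matching Szarek--Werner inequality $M(t) \leq \frac{2}{t + \sqrt{t^{2}+8/\pi}}$ and show it implies $M(t) \leq \frac{\sqrt{2}}{t+1}$. Rearranging, this reduces to showing $(\sqrt{2}-1)\,t + \sqrt{2} \leq \sqrt{t^{2} + 8/\pi}$ for all $t \geq 0$. The left-hand side is positive on $[0,\infty)$, so I can square and obtain a quadratic inequality with negative leading coefficient $2 - 2\sqrt{2}$. I would finish by checking that (i) the value at $t=0$ is $2 - 8/\pi < 0$, and (ii) the discriminant of this quadratic is negative (using $64(\sqrt{2}-1)/\pi > 8$), so the parabola has no real zeros and stays negative throughout $[0,\infty)$.

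The hardest part is not conceptual but numerical: the $8/\pi$ constant in the upper Szarek--Werner bound is tight enough that the discriminant check leaves only a small margin, so the arithmetic must be done carefully. A self-contained alternative would be to use the ODE $M'(t) = tM(t) - 1$ to study the monotonicity of $J(t) = (t+1)M(t)$ directly, observing that $J(0) = \sqrt{\pi/2}$ and $J(t) \to 1$ as $t \to \infty$; but pinning down the sign of $J'(t) = M(t)(t^{2}+t+1) - (t+1)$ still requires bounds on $M(t)$ essentially as sharp as Szarek--Werner, so routing through their inequality is the cleaner option.
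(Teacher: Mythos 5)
Your derivation is correct, and it takes exactly the route the paper gestures at: the lemma is stated as ``Follows Szarek \& Werner'' with no proof given, and your argument is precisely the deduction that justifies that attribution. The reformulation in terms of the Mills ratio $M(t) = \sqrt{2\pi}\,e^{t^2/2}\Pr[Z\geq t]$ is right, the lower-bound comparison reduces cleanly to $4t \geq 0$, and the upper-bound discriminant computation checks out: expanding gives discriminant $= 8 - 64(\sqrt{2}-1)/\pi \approx -0.44 < 0$, which together with the negative leading coefficient $2-2\sqrt{2}$ and negative constant term $2-8/\pi$ shows the squared inequality holds on all of $[0,\infty)$. Nothing to add.
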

\begin{lemma}[Lu \& Li \cite{lu2009}]\label{lem:luli}
Let $\z$ be a $d$-dimensional vector of i.i.d. standard normal random variables 
and let $D \subset \real^d$ be a closed convex domain that does not contain the origin.
Let~$\Delta$ denote the Euclidean distance to the unique closest point in $D$, then we have that 
\begin{equation}
	\Pr[\z \in D] \leq e^{-\Delta^{2}/2}.
\end{equation}
\end{lemma}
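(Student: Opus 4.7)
The plan is to reduce the $d$-dimensional convex-set tail probability to a one-dimensional Gaussian tail by way of a supporting hyperplane. Since $D$ is closed, convex, and avoids the origin, there is a unique nearest point $\y^{*} \in D$ with $\norm{\y^{*}}_{2} = \Delta > 0$. The first-order optimality condition for minimizing $\norm{\cdot}_{2}^{2}$ over the convex set $D$ (equivalently, the supporting hyperplane theorem applied to $D$ at $\y^{*}$) gives $\ip{\y - \y^{*}}{\y^{*}} \geq 0$ for every $\y \in D$, which rearranges to $\ip{\y}{\y^{*}/\Delta} \geq \Delta$. Hence $D \subseteq H$ where $H = \{ \x \in \real^{d} \mid \ip{\x}{\y^{*}/\Delta} \geq \Delta \}$ is a closed half-space at signed distance $\Delta$ from the origin.

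Having reduced to a half-space, I would invoke rotational invariance of the isotropic standard Gaussian: because $\y^{*}/\Delta$ is a unit vector, the scalar projection $Z = \ip{\z}{\y^{*}/\Delta}$ is a univariate standard normal, so $\Pr[\z \in D] \leq \Pr[\z \in H] = \Pr[Z \geq \Delta]$. The last step is the standard one-sided Chernoff bound: applying Markov to $e^{sZ}$ with $s = \Delta$ and using the Gaussian moment generating function $\E[e^{sZ}] = e^{s^{2}/2}$ gives $\Pr[Z \geq \Delta] \leq e^{-\Delta^{2}/2}$, which is the desired inequality.

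None of the three steps is really an obstacle. The only place demanding any care is the supporting hyperplane step: closedness of $D$ is needed so that the infimum defining $\Delta$ is actually attained at some $\y^{*}$, convexity is needed so that the tangent hyperplane at $\y^{*}$ separates $\y^{*}$ (and therefore all of $D$) from the origin, and the assumption $0 \notin D$ ensures $\Delta > 0$ so the projection is well-defined. The constant in front of $e^{-\Delta^{2}/2}$ is loose by a factor of $2$ compared to what Lemma~\ref{lem:univariatebounds} would yield in one dimension, but this slightly weaker form is what the subsequent analysis of the Gaussian filter family $\mathcal{G}$ actually invokes, so no sharper version is needed.
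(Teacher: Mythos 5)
Your proof is correct. The paper does not supply its own proof of this lemma — it is cited directly to Lu and Li \cite{lu2009} — so there is nothing in the text to compare against, but the argument you give is the standard and natural one: project onto the containing half-space via the variational characterization of the nearest point in a closed convex set, reduce to a one-dimensional Gaussian tail by rotational invariance, and finish with the Chernoff/Mill-type bound $\Pr[Z \geq \Delta] \leq e^{-\Delta^2/2}$. Every step is sound, and you correctly flag where closedness (existence of $\y^*$), convexity (uniqueness and the supporting-hyperplane inequality $\ip{\y - \y^*}{\y^*} \geq 0$), and $0 \notin D$ (so $\Delta > 0$) are used. One small quibble: the discrepancy with Lemma~\ref{lem:univariatebounds} is not merely a constant factor of $2$ but a polynomial factor of order $1/(t+1)$; that said, your substantive point — that the cruder $e^{-\Delta^2/2}$ bound is all that the downstream analysis of the Gaussian filter family uses — is right, and it is precisely why the lemma is stated in this clean form.
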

\begin{lemma}[Tail upper bound] \label{lem:upper}
	For $\alpha, \lambda, t, \beta$ satisfying \mbox{$0 < \alpha < 1$}, $-1 \leq \lambda \leq 1$, $t > 0$, and $-1 < \beta < \alpha$ 
	every pair of standard normal random variables $(X, Y)$ with correlation $\beta' \leq \beta$ satisfies 
\begin{equation}
	\Pr[X \geq t \land Y \geq \alpha^{\lambda} t] \leq e^{-\Delta^{2}/2}
\end{equation}
where $\Delta^2 = (1 + \frac{(\alpha^{\lambda} - \beta)^2}{1-\beta^2})t^2$.    
\begin{proof}
For $\beta' = -1$ the result is trivial. 
For values of $\beta'$ in the range \mbox{$-1 < \beta' \leq \beta$} we use the $2$-stability of the normal distribution to analyze a tail bound for 
$(X, Y)$ in terms of a Gaussian projection vector $\z = (Z_1, Z_2)$ applied to unit vectors \mbox{$\x, \y \in \real^2$}.
That is, we can define \mbox{$X = \ip{\z}{\x}$} and $Y = \ip{\z}{\x}$ for some appropriate choice of $\x$ and $\y$.
Without loss of generality we set $\x = (1, 0)$ and note that for $\E[XY] = \beta'$ we must have that \mbox{$\y = (\beta', \sqrt{1-\beta'^2})$}. 
If we consider the region of $\real^2$ where $\z$ satisfies $X \geq t \land Y \geq \alpha^{\lambda} t$ 
we get a closed domain $D$ defined by $\z = (Z_1, Z_2)$ such that $Z_1 \geq t$ 
and \mbox{$Z_2 \geq (\alpha^{\lambda} t - \beta' Z_1)/(\sqrt{1-\beta'^2})$}.
The squared Euclidean distance from the origin to the closest point in $D$ at least 
$\Delta^2$ as can be seen by the fact that $\Delta^2$ decreasing in $\beta$.
Combining this observation with Lemma \ref{lem:luli} we get the desired result.
\end{proof}
\end{lemma}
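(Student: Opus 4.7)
The plan is to recast the bivariate Gaussian tail as the probability that a $2$-dimensional standard Gaussian lands in a convex set that avoids the origin, then apply Lemma \ref{lem:luli}. By $2$-stability of the normal distribution I can write $X = \ip{\z}{\x}$ and $Y = \ip{\z}{\y}$ with $\z = (Z_1, Z_2) \sim \N^2(0,1)$ and deterministic unit vectors $\x, \y \in \real^2$ satisfying $\ip{\x}{\y} = \beta'$. After a rotation to $\x = (1,0)$ and $\y = (\beta', \sqrt{1-\beta'^2})$, the event $\{X \geq t,\ Y \geq \alpha^{\lambda} t\}$ becomes the event
\begin{equation*}
\z \in D_{\beta'} := \{z \in \real^2 : z_1 \geq t,\ \beta' z_1 + \sqrt{1-\beta'^2}\, z_2 \geq \alpha^{\lambda} t\},
\end{equation*}
and $D_{\beta'}$ is closed, convex, and disjoint from the origin. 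Lemma \ref{lem:luli} then gives $\Pr[\z \in D_{\beta'}] \leq \exp(-\Delta(\beta')^2/2)$, where $\Delta(\beta')$ is the Euclidean distance from the origin to $D_{\beta'}$.

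Next I would compute $\Delta(\beta')$ geometrically. The two bounding lines meet at the corner $c = (t,\ (\alpha^{\lambda} - \beta') t / \sqrt{1-\beta'^2})$, which lies in the closed first quadrant because the hypotheses imply $\alpha^{\lambda} \geq \alpha > \beta \geq \beta'$ for every $\lambda \in [-1,1]$. To see that $c$ is the nearest point of $D_{\beta'}$ to the origin, one can project the origin onto each bounding line separately and verify that neither projection lies inside $D_{\beta'}$: the projection onto $\{z_1 = t\}$ is $(t,0)$, which fails the other constraint because $\beta' < \alpha^{\lambda}$; the projection onto the other line is $\alpha^{\lambda}t\,(\beta', \sqrt{1-\beta'^2})$, whose first coordinate $\alpha^{\lambda}\beta' \leq \alpha^{1+\lambda} \leq 1 < 1$ (sharpened by $\beta' < \alpha$) fails to exceed $t$. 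The closest point must therefore be the corner $c$, yielding
\begin{equation*}
\Delta(\beta')^2 = \|c\|^2 = t^2\left(1 + \frac{(\alpha^{\lambda}-\beta')^2}{1-\beta'^2}\right).
\end{equation*}

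Finally I would upgrade from correlation $\beta'$ to the worst-case correlation $\beta$ by showing $\Delta(\beta')^2$ is strictly decreasing in $\beta'$ on $(-1, \alpha)$. The quotient rule applied to $(\alpha^{\lambda} - \beta')^2/(1-\beta'^2)$ gives numerator $-2(\alpha^{\lambda} - \beta')(1 - \alpha^{\lambda}\beta')$, and both factors are positive on the relevant range: the first because $\alpha^{\lambda} > \beta'$ as already noted, and the second because $\alpha^{\lambda}\beta' < \alpha^{1+\lambda} \leq 1$ for $\beta' > 0$ and trivially for $\beta' \leq 0$. Hence $\Delta(\beta')^2 \geq \Delta(\beta)^2$ for every $\beta' \leq \beta$, and the probability bound only strengthens. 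The degenerate boundary $\beta' = -1$ is handled separately and is immediate, since there $Y = -X$ forces $\{X \geq t\} \cap \{Y \geq \alpha^{\lambda} t\} = \emptyset$. I expect the main obstacle to be the geometric step in the middle paragraph: one must carefully rule out the possibility that the closest point of $D_{\beta'}$ is a foot of perpendicular lying on just one face rather than the corner, which is precisely where the sign bookkeeping $\alpha^{\lambda} > \beta'$ and $\alpha^{\lambda} \beta' \leq 1$ enters and where the constraint $\lambda \in [-1,1]$ is needed through $\alpha^{1+\lambda} \leq 1$.
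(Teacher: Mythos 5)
Your proof is correct and follows essentially the same route as the paper: reduce to a two-dimensional standard Gaussian via $2$-stability, identify the event with membership in a convex region $D_{\beta'}$ avoiding the origin, apply Lemma~\ref{lem:luli}, and then pass from $\beta'$ to $\beta$ by monotonicity of $\Delta(\beta')^2$. You in fact fill in two details the paper leaves implicit---the verification that the corner of $D_{\beta'}$ is the nearest point to the origin, and the explicit quotient-rule calculation showing $(\alpha^{\lambda}-\beta')^2/(1-\beta'^2)$ is decreasing in $\beta'$---and both are handled correctly (modulo the small typo ``$\leq 1 < 1$'', which should read ``$< \alpha^{1+\lambda} \leq 1$'').
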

\begin{lemma}[Tail lower bound] \label{lem:lower}
For $\alpha, \lambda, t$ satisfying $0 < \alpha < 1$, $-1 \leq \lambda \leq 1$, 
and $t > 0$ every pair of standard normal random variables $(X, Y)$ with correlation $\alpha' \geq \alpha$ satisfies 
\begin{equation}
	\Pr[X \geq t \land Y \geq \alpha^{\lambda} t] \geq \frac{e^{-\Delta^{2}/2}}{2\pi(1+ t/\alpha)^{2}}    
\end{equation}
where $\Delta^2 = (1 + \frac{(\alpha^{\lambda} - \alpha)^2}{1-\alpha^2})t^2$.    
\end{lemma}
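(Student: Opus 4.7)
The plan is to reduce the claim to the extremal case $\alpha' = \alpha$ via a Gaussian correlation-monotonicity argument, and then lower bound the resulting orthant probability by a whitening plus rectangular restriction, so that the univariate bound from Lemma~\ref{lem:univariatebounds} applies to each coordinate.

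First I would invoke Slepian's inequality to observe that for fixed positive thresholds, the bivariate orthant probability $\Pr[X \geq t,\, Y \geq \alpha^{\lambda} t]$ is non-decreasing in $\alpha' = \E[XY]$. Both thresholds are strictly positive since $\alpha \in (0,1)$ and $\lambda \in [-1,1]$ give $\alpha^{\lambda} \in [\alpha, 1/\alpha] \subset (0,\infty)$. Because the right hand side of the claim does not depend on $\alpha'$, it suffices to establish the bound in the worst case $\alpha' = \alpha$.

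Next I would whiten by writing $Y = \alpha X + \sqrt{1-\alpha^2}\,\tilde Y$ with $\tilde Y \sim \mathcal{N}(0,1)$ independent of $X$, and setting $s = (\alpha^{\lambda} - \alpha)\, t / \sqrt{1-\alpha^2}$ (so $s \geq 0$ because $\lambda \leq 1$ forces $\alpha^{\lambda} \geq \alpha$). A short calculation shows that on the rectangle $R = \{X \geq t\} \cap \{\tilde Y \geq s\}$ one has $Y \geq \alpha t + \sqrt{1-\alpha^2}\, s = \alpha^{\lambda} t$, so $R$ lies inside the tail event. Independence then splits $\Pr[R]$ into a product and two applications of Lemma~\ref{lem:univariatebounds} yield
\begin{equation*}
\Pr[R] \geq \frac{e^{-t^2/2}}{\sqrt{2\pi}(t+1)} \cdot \frac{e^{-s^2/2}}{\sqrt{2\pi}(s+1)} = \frac{e^{-(t^2+s^2)/2}}{2\pi(t+1)(s+1)}.
\end{equation*}
Direct algebra gives $t^2 + s^2 = \Delta^2$. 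For the polynomial prefactor, $\lambda \geq -1$ yields $\alpha^{\lambda} \leq 1/\alpha$, so $s \leq (1/\alpha - \alpha)\, t/\sqrt{1-\alpha^2} = t\sqrt{1-\alpha^2}/\alpha \leq t/\alpha$, and trivially $t \leq t/\alpha$, giving $(t+1)(s+1) \leq (1+t/\alpha)^{2}$. Combining these estimates completes the case $\alpha' = \alpha$.

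The main obstacle is really the initial reduction. If one tries the whitening argument directly at an arbitrary correlation $\alpha' \geq \alpha$, the resulting exponent is $t^2\bigl(1 + (\alpha^{\lambda} - \alpha')^2/(1-\alpha'^2)\bigr)$, and the map $\alpha' \mapsto (\alpha^{\lambda} - \alpha')^2/(1-\alpha'^2)$ is not monotone in $\alpha'$ and can strictly exceed its value at $\alpha' = \alpha$ (for example as $\alpha' \to 1^{-}$), so the naive two-factor bound is not strong enough on its own. Correlation-monotonicity of Gaussian orthant probabilities is precisely what rescues the argument; once the reduction to $\alpha' = \alpha$ is in place, the remaining work is just the standard rectangular restriction combined with the Szarek--Werner univariate bound the paper has already recorded.
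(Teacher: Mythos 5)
Your proposal is correct and follows essentially the same route as the paper: whiten via $Y = \alpha X + \sqrt{1-\alpha^2}\,\tilde{Y}$, restrict to the rectangle $\{X \geq t\} \cap \{\tilde{Y} \geq s\}$ with $s = (\alpha^{\lambda}-\alpha)t/\sqrt{1-\alpha^2}$, apply the Szarek--Werner lower bound to each independent factor, and finish with $t^2 + s^2 = \Delta^2$ and $(t+1)(s+1) \leq (1+t/\alpha)^2$. The only real difference is that you justify the reduction to the worst case $\alpha' = \alpha$ explicitly via Slepian's inequality, which also subsumes the paper's separate $\alpha'=1$ case, whereas the paper performs the computation at correlation $\alpha$ and then tersely appeals to monotonicity; your observation that simply re-running the whitening at $\alpha' > \alpha$ does not work, because $(\alpha^{\lambda}-\alpha')^2/(1-\alpha'^2)$ is not monotone in $\alpha'$, is exactly why that appeal must be read as a Slepian-type correlation monotonicity rather than monotonicity of the derived formula.
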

\begin{proof}
	For $\alpha' = 1$ the result follows directly from Lemma \ref{lem:univariatebounds}. 
	For $\alpha' < 1$ we use the trick from the proof of Lemma \ref{lem:upper} and define $X = \ip{\z}{\x}$ and $Y = \ip{\z}{\x}$ 
	where $\x = (1, 0)$ and $\y = (\alpha, \sqrt{1-\alpha^2})$ and $\z = (Z_1, Z_2)$ is a vector of two i.i.d. standard normal random variables.
	This allows us to rewrite the probability as follows:
	\begin{align*}
		&\Pr[Z_1 \geq t \land \alpha Z_1 + \sqrt{1-\alpha^2}Z_2 \geq \alpha^{\lambda} t] \\
		&= \Pr[Z_1 \geq t]\Pr[\alpha Z_1 + \sqrt{1-\alpha^2}Z_2 \geq \alpha^{\lambda} t \mid Z_1 \geq t] \\
		&\geq \Pr[Z_1 \geq t]\Pr[\alpha t + \sqrt{1-\alpha^2}Z_2 \geq \alpha^{\lambda} t]  
	\end{align*}
	By the restrictions on $\alpha$ and $\lambda$ we have that \mbox{$(\alpha^{\lambda} - \alpha)t / \sqrt{1-\alpha^2} \leq t/\alpha$}.
	The result follows from applying the lower bound from Lemma \ref{lem:univariatebounds} and noting that the bound is increasing in $\alpha$.
\end{proof}
\subsection{Space-time tradeoffs on the unit sphere}
Summarizing the bound from the previous section, the family $\mathcal{G}$ from Lemma \ref{lem:gaussianlsf} satisfies that
\begin{align}
	p_1 &\geq \frac{e^{-(1 + \frac{(\alpha^{\lambda} - \alpha)^2}{1-\alpha^2})t^2/2}}{2\pi(1+ t/\alpha)^{2}} \\
	p_2 &\leq e^{-(1 + \frac{(\alpha^{\lambda} - \beta)^2}{1-\beta^2})t^2/2} \\
	p_q &\leq e^{-\alpha^{2\lambda}t^{2}/2} \\
	p_u &\leq e^{-t^{2}/2}.
\end{align}

We combine the Gaussian filters with Theorem \ref{thm:lsf} to show that we can solve the $(\alpha, \beta)$-similarity problem efficiently 
for the full range of space/time tradeoffs, even when $\alpha, \beta$ are allowed to depend on $n$, 
as long as the gap $\alpha - \beta$ is not too small. 
\begin{customthm}{1.2}\label{thm:sphere}
	For every choice of $0 \leq \beta < \alpha < 1$ and $\lambda \in [-1, 1]$ 
	we can construct a fully dynamic data structure that solves the $(\alpha, \beta)$-similarity problem in $(\sphere{d}, \ip{\cdot}{\cdot})$.
	Suppose that $\alpha - \beta \geq (\ln n)^{-\zeta}$ for some constant $\zeta < 1/2$, 
	that satisfies the guarantees from Theorem \ref{thm:lsfvanilla} with exponents  
	$\rho_q = \left. \frac{(1-\alpha^{1+\lambda})^2}{1-\alpha^2}  \middle/  \frac{(1- \alpha^{\lambda} \beta)^2}{1-\beta^2} \right.$ and  
	$\rho_u = \left. \frac{(\alpha^{\lambda} - \alpha)^2}{1-\alpha^2}  \middle/  \frac{(1-\alpha^{\lambda} \beta)^2}{1-\beta^2} \right.$. 
\end{customthm}
\begin{proof}
	Assuming that $\alpha - \beta \geq (\ln n)^{-\zeta}$ there exists a constant $\varepsilon > 0$ where by setting
	the parameter $t$ of $\mathcal{G}$ such that $t^2 / 2 = \frac{1-\beta^2}{(1-\alpha^{\lambda} \beta)^2}(\ln n)^{\varepsilon}$
	the family of filters satisfies the assumptions in Theorem \ref{thm:lsf} while guaranteeing 
	that the second term in $\rho_q$ and $\rho_u$ from Lemma \ref{lem:gaussianlsf} are $o(1)$.
\end{proof}
\begin{remark}
	Theorem \ref{thm:sphere} aims for simplicity and generality while allowing $\alpha$ and $\beta$ to depend on $n$. 
	For specific values of $\alpha, \beta, \lambda$ it is easy to find better bounds on the probabilties 
	(e.g. the bounds by Savage \cite{savage1962}) and to adjust $t$ in Lemma \ref{lem:gaussianlsf} 
	to avoid powering (setting $\kappa_{1} = 1,  \kappa_{2} = 0$) in the LSF framework.
\end{remark}
\section{Approximate feature maps, characteristic functions, and Bochner's Theorem} \label{app:characteristic}
We begin by defining what a characteristic function is and listing some properties that are useful for our application.
More information about characteristic functions can be found in the books by Lukacs~\cite{lukacs1970} and Ushakov~\cite{ushakov1999}.
\begin{lemma}[\cite{lukacs1970, ushakov1999}] \label{lem:characteristicproperties}
	Let $Z$ denote a random variable with distribution function $\mu$.
	Then the characteristic function $k(\Delta)$ of $Z$ is defined as 
		\begin{equation}
			k(\Delta) = \int_{-\infty}^{\infty}\mu(t)e^{i \Delta t} dt
		\end{equation}
	and it has the following properties:
	\begin{itemize}
		\item[-] A distribution function is symmetric if and only if its characteristic function is real and even.
		\item[-] Every characteristic function $k(\Delta)$ is uniformly continuous, has $k(0) = 1$, and $|k(\Delta)| \leq 1$ for all real $\Delta$.
		\item[-] Suppose that $k(\Delta)$ denotes the characteristic function of an absolutely continuous distribution
			     then \mbox{$\lim_{\Delta \rightarrow \infty}|k(\Delta)| = 0$}.
		\item[-] Let $X$ and $Y$ be independent random variables
			with characteristic functions $k_{X}$ and $k_{Y}$. 
			Then the characteristic function of $Z = (X,Y)$ is given by 
			$k(x, y) = k_{X}(x) k_{Y}(y)$. 
	\end{itemize}
\end{lemma}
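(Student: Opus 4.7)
The plan is to treat the four properties separately, using the probabilistic reading $k(\Delta) = \mathbb{E}[e^{i\Delta Z}]$, which matches the stated integral once $\mu(t)\,dt$ is interpreted as the underlying probability measure (a density in the absolutely continuous case). Throughout I would lean on three classical pieces of Fourier/measure theory: Euler's formula, the dominated convergence theorem, and the Riemann--Lebesgue lemma together with the uniqueness of the Fourier transform.

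For the symmetry characterization, I would split via Euler's formula into
\begin{equation*}
k(\Delta) = \int \mu(t)\cos(\Delta t)\,dt + i\int \mu(t)\sin(\Delta t)\,dt.
\end{equation*}
If $\mu$ is symmetric about $0$, the sine integrand is odd and integrates to zero, making $k$ real; evenness of $k$ follows from evenness of cosine. For the converse I would use that the reflected distribution $\tilde{\mu}(t) := \mu(-t)$ has characteristic function $\overline{k(\Delta)} = k(-\Delta)$; if $k$ is real and even, then the two characteristic functions agree, so by the uniqueness theorem for Fourier transforms of finite measures we must have $\mu = \tilde{\mu}$, i.e.\ $\mu$ is symmetric.

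For the second bullet, $k(0) = \int \mu(t)\,dt = 1$ since $\mu$ is a probability measure, and $|k(\Delta)| \leq \int \mu(t)\,|e^{i\Delta t}|\,dt = 1$ by the triangle inequality together with $|e^{i\Delta t}| = 1$. For uniform continuity I would note that
\begin{equation*}
|k(\Delta + h) - k(\Delta)| \leq \int \mu(t)\,|e^{iht} - 1|\,dt,
\end{equation*}
an expression independent of $\Delta$, dominated pointwise by $2\mu(t)$, and tending pointwise to $0$ as $h \to 0$; dominated convergence then gives a bound that goes to $0$ uniformly in $\Delta$.

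The third property is exactly the Riemann--Lebesgue lemma applied to the $L^1$ density guaranteed by absolute continuity of $\mu$, and this is the only step that invokes a nontrivial external result; I would cite it rather than reprove it. The fourth property is a direct consequence of independence: writing $Z = (X,Y)$ and $\Delta = (x,y)$,
\begin{equation*}
k_Z(x,y) = \mathbb{E}\bigl[e^{i(xX + yY)}\bigr] = \mathbb{E}\bigl[e^{ixX}\bigr]\,\mathbb{E}\bigl[e^{iyY}\bigr] = k_X(x)\,k_Y(y),
\end{equation*}
where the middle equality uses that $e^{ixX}$ and $e^{iyY}$ are independent bounded random variables. The main obstacle is purely expository: choosing a uniform interpretation of $\mu$ (density vs.\ distribution function vs.\ measure) so that each of the four arguments reads cleanly, and deciding how much of the Fourier inversion/uniqueness and Riemann--Lebesgue machinery to cite rather than derive.
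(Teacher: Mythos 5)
The paper does not prove this lemma; it is stated as a collection of standard facts about characteristic functions and is attributed directly to the textbooks of Lukacs and Ushakov \cite{lukacs1970, ushakov1999}, with no argument given in the text. Your write-up is therefore a from-scratch derivation of results the paper simply cites.

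That said, your derivation is correct and is essentially the standard textbook treatment. The one-direction Euler-formula argument plus the reflection/uniqueness argument for the converse is exactly how the symmetry characterization is proved in Lukacs; the $k(0)=1$ and $|k|\le 1$ bounds are immediate from $|e^{i\Delta t}|=1$; uniform continuity via $|k(\Delta+h)-k(\Delta)|\le \int |e^{iht}-1|\,d\mu(t)$ and dominated convergence is the canonical proof; the third bullet is precisely Riemann--Lebesgue, which you correctly cite rather than reprove; and the product formula for independent coordinates is the factorization of expectations. You also correctly flag the notational wrinkle in the statement, namely that $\mu$ is written as a ``distribution function'' but used as a density/measure in the integral $\int \mu(t)e^{i\Delta t}\,dt$; reading $\mu(t)\,dt$ as the law of $Z$ (a density in the absolutely continuous case) is the right resolution, and it is what makes the second and third bullets parse. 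In short: nothing to compare against in the paper, and no gap in your proposal.
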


Bochner's Theorem reveals the relation between characteristic functions and the class of real-valued functions 
$k(\x, \y)$ that admit a feature space representation $k(\x, \y) = \ip{\phi(\x)}{\phi(\y)}$
\begin{theorem}[Bochner's Theorem \cite{rudin1990}]
	A function $k : \real^d \times \real^d \to [0,1]$ is positive definite if and only if it can be written on the form
	\begin{equation}
		k(\x, \y) = \int_{\real^d} \mu(\va) e^{i \ip{\va}{\x-\y}}  d\va
	\end{equation}
	where $\mu$ is the probability density function of a symmetric distribution.
\end{theorem}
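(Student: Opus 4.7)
The plan is to prove the two implications separately: the ``if'' direction is a direct computation, while the ``only if'' direction requires identifying $\mu$ as the Fourier transform of $k$ (interpreted in the sense of tempered distributions) and then promoting that distribution to an actual probability density.

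For the ``$\Leftarrow$'' direction I would take any finite set of points $\x_1,\dots,\x_n \in \real^d$ and complex scalars $c_1,\dots,c_n$, exchange the finite sum with the integral, and recognise a squared modulus:
\[
\sum_{j,k} c_j\overline{c_k}\, k(\x_j,\x_k) \;=\; \int_{\real^d} \mu(\va)\,\Bigl|\sum_{j} c_j e^{i\ip{\va}{\x_j}}\Bigr|^{2} d\va \;\geq\; 0,
\]
which is positive definiteness of $k$ viewed as a function of $\x-\y$. Symmetry of $\mu$ ensures $k$ is real-valued (imaginary parts of $e^{i\ip{\va}{\x-\y}}$ cancel on integration), and the normalisation $k(0)=\int \mu =1$ is built into the fact that $\mu$ is a density.

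For the harder ``$\Rightarrow$'' direction I would view $k$ as a continuous bounded function of the single variable $\x-\y$ and exhibit $\mu$ as its Fourier transform. The key calculation is that, for any Schwartz function $g$ and its reflected conjugate $\tilde g(\x)=\overline{g(-\x)}$, the substitution $\x=\y-\z$ gives
\[
\int k(\x)\,(g*\tilde g)(\x)\,d\x \;=\; \iint k(\y-\z)\,g(\y)\,\overline{g(\z)}\,d\y\,d\z \;\geq\; 0
\]
by the continuous form of positive definiteness, while on the Fourier side $\widehat{g*\tilde g}=|\hat g|^{2}\geq 0$. Hence the tempered distribution $\hat k$ pairs non-negatively against every test function of the form $|\hat g|^{2}$; since these are dense enough in the cone of non-negative Schwartz functions, the Bochner--Schwartz theorem upgrades $\hat k$ to a positive tempered measure, which is finite because $|k|\leq k(0)=1$ and symmetric because $k$ is real and even. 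Fourier inversion then delivers the representation.

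I expect the delicate step to be the passage from ``$\hat k$ is a non-negative tempered distribution'' to ``$\hat k$ is an honest finite measure admitting an integrable density $\mu$.'' My plan here is to regularise by convolving $k$ with a Gaussian mollifier $G_{\varepsilon}$: the function $k*G_{\varepsilon}$ is continuous, bounded, and genuinely integrable, so classical Fourier inversion applies and realises its Fourier transform as the bona-fide non-negative $L^{1}$ function $\hat k\,\hat G_{\varepsilon}$. Letting $\varepsilon\to 0$ and invoking Helly-type weak compactness of uniformly bounded non-negative measures (with tightness supplied by continuity of $k$ at the origin) produces a limiting probability measure $\mu$; writing $\mu$ as a density $\mu(\va)\,d\va$ as in the statement is the final step, and it ultimately rests on the absolute-continuity setting implicit in the excerpt's characteristic-function framework (cf.\ the fourth bullet of Lemma~\ref{lem:characteristicproperties}).
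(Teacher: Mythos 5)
The paper does not prove this statement at all: Bochner's Theorem is cited from Rudin \cite{rudin1990} as a classical result and invoked as a black box inside the derivation of the Rahimi--Recht feature map. So there is no ``paper's own proof'' to compare against, and any proof you supply is bonus content rather than a reconstruction of what the paper does.

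On its own terms your sketch follows the standard Fourier-analytic route and the skeleton is sound: direct verification of positive definiteness from the integral representation for the easy direction, and for the hard direction the pairing $\int k \cdot (g * \tilde g)$ reduced to $\iint k(\y-\z) g(\y)\overline{g(\z)}$, Bochner--Schwartz to promote $\hat k$ to a positive tempered measure, finiteness from $|k|\le k(0)=1$, and symmetry of the measure from $k$ real and even. You are also right to flag that the paper's phrasing overstates the conclusion --- Bochner gives a probability \emph{measure}, not an $L^1$ density $\mu$, and absolute continuity has to come from elsewhere (here, from the fact that the paper only ever instantiates $k$ with characteristic functions of absolutely continuous laws, cf.\ the third bullet of Lemma~\ref{lem:characteristicproperties}). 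The one genuine gap is in the regularisation step. You write that $k * G_{\varepsilon}$ is integrable, but convolution with a Gaussian does not create decay: if $k\equiv 1$ then $k * G_{\varepsilon}\equiv 1$. The standard fix is to \emph{multiply} by a Gaussian, $k_{\varepsilon}(\x) = k(\x)\,e^{-\varepsilon\|\x\|^2}$; then $k_{\varepsilon}\in L^1$, classical inversion applies, $\widehat{k_{\varepsilon}} = \hat k * \widehat{e^{-\varepsilon\|\cdot\|^2}}$ is a genuine nonnegative $L^1$ density, and the $\varepsilon\to 0$ weak-compactness argument you describe then goes through unchanged. Relatedly, $\hat k\,\hat G_{\varepsilon}$ is a measure with a continuous Radon--Nikodym factor against $\hat k$, not automatically an $L^1$ function, so the sentence claiming otherwise should be dropped along with the convolution. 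You should also make the continuity hypothesis on $k$ explicit: the ``only if'' direction is false without it, and the paper's statement omits it (again, harmlessly, because characteristic functions are automatically uniformly continuous by Lemma~\ref{lem:characteristicproperties}).
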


Rahimi \& Recht's \cite{rahimi2007} family of approximate feature maps $\RR$ is constructed from Bochner's Theorem by making use of Euler's Theorem as follows:
\begin{align*}
	k(\x, \y) &= \int_{\real^d} \mu(\va) e^{i \ip{\va}{\x-\y}}  d\va \\
			  &= \int_{\real^d} \mu(\va)( \cos(\ip{\va}{\x-\y}) + i \sin(\ip{\va}{\x-\y})) d\va \\
	&= \E_{\va}[\cos(\ip{\va}{\x-\y})]  \\ 
	&= \E_{\va, b}[\cos(\ip{\va}{\x-\y}) + \cos(\ip{\va}{\x} + \ip{\va}{\y}+ 2b)]  \\ 
	&= 2\E_{\va, b}[\cos(\ip{\va}{\x} + b) \cdot \cos(\ip{\va}{\y} + b)]. 
\end{align*}
Where the third equality makes use of the fact that $k(\x, \y)$ is real-valued to remove the complex part of the integral
and the fifth equality uses that $2\cos(x)\cos(y) = \cos(x + y) + \cos(x - y)$.

Now that we have an approximate feature map onto the sphere for the class of shift-invariant kernels, 
we will take a closer look at what functions this class contains, and what their applications are for similarity search. 
Given an arbitrary similarity function, we would like to be able to determine whether it is indeed a characteristic function.
Unfortunately, there are no known simple techniques for answering this question in general.
However, the machine learning literature contains many applications of different shift-invariant kernels \cite{scholkopf2002} 
and many common distributions have real characteristic functions (see Appendix B in \cite{ushakov1999} for a long list of examples).
Characteristic functions are also well studied from a mathematical perspective \cite{lukacs1970, ushakov1999},
and a number of different necessary and sufficient conditions are known.
A classical result by Pólya \cite{polya1949} gives simple sufficient conditions for a function to be a characteristic function.
Through the vectorization property from Lemma \ref{lem:characteristicproperties}, 
Pólya's conditions directly imply the existence of a large class of similarity measures on $\real^d$ that can fit into the above framework. 
\begin{theorem}[Pólya \cite{polya1949}]
	Every even continuous function $k : \real \to \real$ satisfying the properties
	\begin{itemize}
		\item[-] $k(0) = 1$ 
		\item[-] $\lim_{\Delta \to \infty}k(\Delta) = 0$ 
		\item[-] $k(\Delta)$ is convex for $\Delta > 0$ 
	\end{itemize}
	is a characteristic function.
\end{theorem}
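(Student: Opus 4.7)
The plan is to write $k$ explicitly as a probability mixture of simpler functions that are manifestly characteristic, and then invoke closure of the class of characteristic functions under mixing. The natural building blocks are the tent functions $T_{a}(\Delta)=\max(1-|\Delta|/a,0)$ for $a>0$, each of which is the characteristic function of the Fej\'er density $f_{a}(x)=(1-\cos(ax))/(\pi a x^{2})$; this is a standard direct Fourier computation.

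The first step is to exploit convexity. Since $k$ is convex on $(0,\infty)$ and tends to $0$ at infinity, it is automatically nonincreasing and nonnegative there. The right derivative $k'_{+}$ exists everywhere on $(0,\infty)$, is nondecreasing and nonpositive, and satisfies $\int_{0}^{\infty}(-k'_{+}(s))\,ds = k(0) = 1$. The second derivative $k''$ exists as a nonnegative Radon measure on $(0,\infty)$, and two successive integrations by parts against this measure yield, for $\Delta \ge 0$,
\[
	k(\Delta)=\int_{\Delta}^{\infty}(t-\Delta)\,dk''(t)=\int_{0}^{\infty}\Bigl(1-\frac{\Delta}{a}\Bigr)_{+}\,a\,dk''(a).
\]
By evenness this extends to all real $\Delta$ with $\Delta$ replaced by $|\Delta|$. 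Setting $dW(a)=a\,dk''(a)$ and verifying $\int_{0}^{\infty}dW=1$ via one more integration by parts (using $a\,k'_{+}(a)\to 0$ at infinity, which follows from monotonicity and integrability of $-k'_{+}$), one obtains the mixture representation
\[
	k(\Delta)=\int_{0}^{\infty}T_{a}(\Delta)\,dW(a)
\]
with $W$ a probability measure on $(0,\infty)$.

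Finally, define the density $f(x)=\int_{0}^{\infty}f_{a}(x)\,dW(a)$, which is nonnegative and integrates to $1$ by Tonelli. A second application of Fubini then shows that the characteristic function of $f$ equals exactly $\int_{0}^{\infty}T_{a}(\Delta)\,dW(a)=k(\Delta)$, so $k$ is the characteristic function of the distribution with density $f$. The main obstacle is the nonsmoothness of $k$: convexity provides $k''$ only as a measure, so the two integrations by parts must be phrased as Riemann--Stieltjes integrals against the distribution function of $-k'_{+}$, and the vanishing of boundary terms at $0^{+}$ and at $\infty$ requires explicit justification from the monotonicity and integrability properties of $-k'_{+}$ established in the first step. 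Once those technicalities are handled, the remaining arguments reduce to routine Fourier analysis and interchange of integration.
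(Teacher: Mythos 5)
The paper cites Pólya's criterion from \cite{polya1949} as a black box and supplies no proof, so there is nothing internal to compare your argument against. Your proof is correct, and it is in fact the classical proof of Pólya's theorem: convexity gives a nonnegative Radon measure $dk''$ on $(0,\infty)$; the representation $k(\Delta)=\int_\Delta^\infty(t-\Delta)\,dk''(t)$ (obtainable either by your two integrations by parts or, a touch more cleanly, by one Tonelli interchange on $k(\Delta)=\int_\Delta^\infty\!\int_s^\infty dk''(t)\,ds$) rewrites $k$ as a mixture $\int_0^\infty T_a(\Delta)\,dW(a)$ of tent functions $T_a(\Delta)=(1-|\Delta|/a)_+$ with $dW(a)=a\,dk''(a)$; evaluating at $\Delta=0$ shows $W$ is a probability measure; and each $T_a$ is the characteristic function of the Fejér density $(1-\cos ax)/(\pi a x^2)$, so $k$ is the characteristic function of the $W$-mixture of these densities. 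The boundary technicalities you flag are genuine but all dispatchable from monotonicity: $k'_+(a)\to 0$ as $a\to\infty$ because otherwise $k$ could not tend to $0$; $a\,k'_+(a)\to 0$ as $a\to\infty$ because $-k'_+$ is nonincreasing, nonnegative, and integrable; and $a\,k'_+(a)\to 0$ as $a\to 0^+$ because convexity gives $|a\,k'_+(2a)|\le k(a)-k(2a)\to 0$ by continuity of $k$ at the origin.
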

Based on the results of Section \ref{sec:ls} one could hope for the existence of characteristic functions of the form 
$k(\Delta) = e^{-|\Delta|^s}$ for $s > 2$ but it is known that such functions cannot exist \cite[Theorem D.8]{benyamini1998}.
Furthermore, Marcinkiewicz \cite{marcinkiewicz1939} shows that a function of the form $k(\Delta) = \exp(-\poly(\Delta))$ 
cannot be a characteristic function if the degree of the polynomial is greater than two.

We state a more complete, constructive version of Lemma \ref{lem:rahimirecht} as well as the proof here.
\begin{lemma} \label{lem:rahimirecht2}
	Let $k$ be a real-valued characteristic function with associated distribution function $\mu$ and let $l$ be a positive integer.
	Consider the family of functions $\RR \subseteq \{ v \mid v \colon \real^{d} \to \sphere{l} \}$ 
	where a randomly sampled function $v$ is defined by, independently for $j = 1,\dots,l$, 
	sampling $\va$ from $\mu$ and $b$ uniformly on $[0, 2\pi]$, letting $\hat{v}(\x)_{j} = \sqrt{(2/l)}\cos(\ip{\va}{\x} + b)$
	and normalizing $v(\x)_{j} = \frac{\hat{v}(\x)}{\norm{\hat{v}(\x)}}$.
	The family $\RR$ has the property that for every $\x, \y \in \real^d$ and $\varepsilon > 0$ we have that 
	\begin{equation}
		\Pr_{v \sim \RR}[|\ip{v(\x)}{v(\y)} - k(\x, \y)| \geq \varepsilon] \leq 6e^{-l \varepsilon^2 / 128}.
	\end{equation}
\end{lemma}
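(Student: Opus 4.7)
The plan is to express $\ip{v(\x)}{v(\y)}$ as a ratio of three empirical averages of $l$ i.i.d.\ bounded random variables whose means follow from the Bochner--Euler derivation already displayed in Appendix~\ref{app:characteristic}, and then to combine Hoeffding's inequality with a union bound and a short calculation that controls the normalization step.

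Concretely, I would introduce
\[
A = \ip{\hat v(\x)}{\hat v(\y)},\qquad B = \norm{\hat v(\x)}^2,\qquad C = \norm{\hat v(\y)}^2,
\]
so that $A = \tfrac{1}{l}\sum_{j=1}^{l} 2\cos(\ip{\va_j}{\x}+b_j)\cos(\ip{\va_j}{\y}+b_j)$, and analogously $B,C$ are averages of $2\cos^{2}(\cdot)$. The identity $2\cos\alpha\cos\beta = \cos(\alpha-\beta)+\cos(\alpha+\beta)$, together with the uniformity of each $b_j$ on $[0,2\pi]$ (which kills the $\cos(\ldots+2b_j)$ term in expectation) and Bochner's representation of $k$, yield $\E A = k(\x,\y)$; specialising to $\y=\x$ with $k(0)=1$ gives $\E B = \E C = 1$. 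Since each summand of $A$ lies in $[-2,2]$ and each summand of $B$ or $C$ lies in $[0,2]$, Hoeffding's inequality delivers, for every $\delta>0$,
\[
\Pr[|A-k(\x,\y)|\ge \delta],\ \Pr[|B-1|\ge \delta],\ \Pr[|C-1|\ge \delta]\ \le\ 2e^{-l\delta^{2}/8}.
\]
Setting $\delta=\varepsilon/4$ and union-bounding over the three events gives that, with probability at least $1-6e^{-l\varepsilon^{2}/128}$, all three deviations are simultaneously at most $\varepsilon/4$.

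Finally I would deduce that on this good event $|\ip{v(\x)}{v(\y)}-k(\x,\y)|\le \varepsilon$. Writing $\ip{v(\x)}{v(\y)} = A/\sqrt{BC}$ and decomposing
\[
\frac{A}{\sqrt{BC}} - k(\x,\y) = \frac{A-k(\x,\y)}{\sqrt{BC}} + k(\x,\y)\Bigl(\frac{1}{\sqrt{BC}}-1\Bigr),
\]
using $|k(\x,\y)|\le 1$ (a standard property of characteristic functions from Lemma~\ref{lem:characteristicproperties}) and $\sqrt{BC}\in[1-\varepsilon/4,\,1+\varepsilon/4]$, each of the two terms on the right is bounded by $(\varepsilon/4)/(1-\varepsilon/4)$, for a total of at most $\varepsilon$ when $\varepsilon\le 1$; the inequality is trivial when $\varepsilon>1$, since $|\ip{v(\x)}{v(\y)}-k(\x,\y)|\le 2$ always.

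The main obstacle is to arrange the constants so that both the Hoeffding slack $\delta$ and the loss incurred by division by $\sqrt{BC}$ fit inside the target $\varepsilon$ while producing exactly the stated exponent constant $128 = 8\cdot 16$; the choice $\delta = \varepsilon/4$ (rather than $\varepsilon/2$) together with the uniform bound $|k|\le 1$ is precisely what makes the calculation tight.
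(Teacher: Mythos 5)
Your proof is correct and follows the same route as the paper: Hoeffding on the three unnormalized inner products $A=\ip{\hat v(\x)}{\hat v(\y)}$, $B=\norm{\hat v(\x)}^2$, $C=\norm{\hat v(\y)}^2$ with slack $\varepsilon/4$, a union bound giving $6e^{-l\varepsilon^2/128}$, and then the normalization algebra; you merely spell out the final algebraic step that the paper leaves implicit. One tiny imprecision: your decomposition actually gives $\frac{\varepsilon/2}{1-\varepsilon/4}\le\varepsilon$ for all $\varepsilon\le 2$ (not just $\varepsilon\le 1$), and the ``trivial because $|\cdot|\le 2$'' fallback kicks in only for $\varepsilon>2$, so the two regimes still cover everything, just with the boundary at $2$ rather than $1$.
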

\begin{proof}
	Since $l \cdot \hat{v}(\x)_j \hat{v}(\y)_j$ is bounded between $2$ and $-2$, and we have independence for different values of $j$, 
	Hoeffding's inequality \cite{hoeffding1963} can be applied to show that for every fixed pair of points $\x, \y$
	and $\hat{\varepsilon} > 0$ it holds that
	\begin{equation}
		\Pr[|\ip{\hat{v}(\x)}{\hat{v}(\y)} - k(\x, \y)| \geq \hat{\varepsilon}] \leq 2e^{-l\hat{\varepsilon}^{2}/8}.
	\end{equation}
	From the properties of characteristic functions we have that $k(\x, \x) = 1$ and $k(\x, \y) \leq 1$.
	The bound on the deviation of 
	\begin{equation}
		\ip{v(\x)}{v(\y)} = \frac{\ip{\hat{v}(\x)}{\hat{v}(\y)}}{\sqrt{\ip{\hat{v}(\x)}{\hat{v}(\x)}\ip{\hat{v}(\y)}{\hat{v}(\y)}}}
	\end{equation}
	from $k(\x, \y)$ follows from setting $\hat{\varepsilon} = \varepsilon/4$ 
	and using a union bound over the probabilities that the deviation of one of the inner products is too large.
\end{proof}
Combining the approximate feature map onto the unit sphere with Theorem \ref{thm:spherevanilla} we obtain the following:
\begin{theorem} \label{thm:characteristiclsf}
	Let $k \colon \real^d \to \real$ be a characteristic function and define the similarity measure $S(\x, \y) = k(\x-\y)$.
	Assume that we have access to samples from the distribution associated with $k$, 
	then Theorem \ref{thm:sphere} holds with $(\sphere{d}, \ip{\cdot}{\cdot})$ replaced by $(\real^d, S)$.
\end{theorem}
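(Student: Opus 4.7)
The plan is to reduce the $(\alpha,\beta)$-similarity problem in $(\real^d, S)$ to the corresponding problem in $(\sphere{l}, \ip{\cdot}{\cdot})$ via Rahimi and Recht's approximate feature map from Lemma \ref{lem:rahimirecht2}, and then invoke Theorem \ref{thm:sphere}. During preprocessing I would sample a single $v \sim \RR$ with target dimension $l$ to be fixed below, and maintain the sphere data structure of Theorem \ref{thm:sphere} over the images $\{v(\x) : \x \in P\} \subseteq \sphere{l}$ for a slightly shrunk instance with parameters $(\alpha - \varepsilon, \beta + \varepsilon)$; insertions, deletions, and queries on $\x \in \real^d$ are forwarded to the sphere structure after mapping through $v$.

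For correctness one needs $\ip{v(\x)}{v(\y)}$ to approximate $k(\x - \y) = S(\x, \y)$ within $\varepsilon$ for every pair $(\x, \y)$ formed by a query point and a stored data point. Lemma \ref{lem:rahimirecht2} gives pointwise deviation probability at most $6\exp(-l\varepsilon^2/128)$, so taking $l = \Theta(\varepsilon^{-2} \log n)$ and union-bounding over the at most $n$ relevant pairs per query (valid under the standard assumption that queries are oblivious to the randomness of $v$) ensures that, with probability $1 - o(1)$, any stored $\y$ with $S(\x, \y) \geq \alpha$ satisfies $\ip{v(\x)}{v(\y)} \geq \alpha - \varepsilon$ and any $\y$ with $S(\x, \y) < \beta$ satisfies $\ip{v(\x)}{v(\y)} < \beta + \varepsilon$. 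Consequently, a valid answer to the sphere instance on $(\alpha - \varepsilon, \beta + \varepsilon)$ is a valid answer to the original instance.

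To finish, I would choose $\varepsilon$ small enough that $\varepsilon = o(\alpha - \beta)$, so the gap hypothesis $\alpha - \beta \geq (\ln n)^{-\zeta}$ of Theorem \ref{thm:sphere} still holds for the perturbed parameters; for example, taking $\varepsilon = (\ln n)^{-\zeta'}$ for any $\zeta < \zeta' < 1/2$ gives $l = \Theta((\ln n)^{1 + 2\zeta'}) = n^{o(1)}$, so the target dimension of the sphere structure contributes only an $n^{o(1)}$ factor, and the cost $O(dl)$ of computing $v(\x)$ is absorbed by the $d n^{o(1)}$ term already present in Theorem \ref{thm:sphere}. The main obstacle is subtle rather than deep: one must verify that the rational expressions for $\rho_q, \rho_u$ from Theorem \ref{thm:sphere} are continuous in $(\alpha, \beta)$ at the perturbed values, so that moving from $(\alpha, \beta)$ to $(\alpha - \varepsilon, \beta + \varepsilon)$ with $\varepsilon = o(1)$ shifts the exponents only by an additive $o(1)$, which is then swallowed by the $n^{o(1)}$ slack in the guarantees of the underlying framework.
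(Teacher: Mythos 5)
Your proposal is correct and follows essentially the same route as the paper: sample a Rahimi--Recht feature map $v$ from Lemma~\ref{lem:rahimirecht2} with $l=n^{o(1)}$, forward operations through $v$ into the sphere structure of Theorem~\ref{thm:sphere} built for the perturbed instance $(\alpha-\varepsilon,\beta+\varepsilon)$ with $\varepsilon=o(1)$, and observe that the $O(dl)=dn^{o(1)}$ mapping cost and the $o(1)$ shift in exponents are absorbed by the existing $n^{o(1)}$ slack. The only cosmetic difference is that the paper phrases the final bookkeeping as ``adjusting the parameter $\lambda$ so that it continues to lie in the admissible range'' rather than as a continuity check on $\rho_q,\rho_u$; both resolve the same concern.
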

\begin{proof}
	According to Lemma \ref{lem:rahimirecht2}, we can set $l = n^{o(1)}$ to obtain a map $v \colon \real^d \to \sphere{l}$
	such that the the inner product on $\sphere{l}$ preserves the pairwise similarity between $n^{O(1)}$ 
	points with additive error $\varepsilon = o(1)$.
	This map has a space and time complexity of $O(dl) = dn^{o(1)}$.
	After applying $v$ to the data we can solve the $(\alpha, \beta)$-similarity problem on $(\real^d, k(\x-\y))$ by solving the 
	$(\alpha - \varepsilon, \beta + \varepsilon)$-similarity problem on $(\sphere{d}, \ip{\cdot}{\cdot})$.
	We can use Theorem \ref{thm:sphere} to construct a fully dynamic data structure for solving this problem, 
	adjusting the parameter $\lambda$ so that it continues to lie in the admissible range.
	The space and time complexities follow.
\end{proof}
\section{Proof of Theorem 1.5} \label{app:lowertradeoffproof}
Consider $\rho_q = \frac{\log(p_{q} / p_{1})}{\log(p_{q} / p_{2})}$.
Subject to the (implicit) LSF constraint that $p_q, p_u > p_1 > p_2 > 0$ we see that $\rho_q$ is minimized by setting 
$p_q, p_2$ as small as possible and $p_1$ as large as possible.
We will therefore derive lower bounds on $p_q, p_2$ and an upper bound on $p_1$.
For every value of $p_1$ and $p_2$ we minimize $\rho_q, \rho_u$ by choosing $p_q$ as small as possible.

For a random point $\x \in \cube{d}$ it must hold that $\Pr_{\LSF}[\x \in Q] = |Q|/2^{d}$.
This implies the existence of a fixed point $\y \in \cube{d}$ with the property that $\Pr_{\LSF}[\y \in Q] \geq |Q|/2^{d}$.
A regular filter family must therefore satisfy that $p_q \geq |Q|/2^{d}$ and $p_u \geq |U|/2^{d}$. 
Let $\lambda$ be defined as in Lemma \ref{lem:odhyper} then by a similar argument we have that $p_2 \geq (U/2^d)^{1+\alpha^{2\lambda}}$. 

In order to upper bound $p_1$ we make use of Lemma \ref{lem:odhyper} together with the following lemma that follows directly 
from an application of Hoeffding's inequality~\cite{hoeffding1963}.
\begin{lemma}\label{lem:correlationconcentration}
For every $0 < \varepsilon < (1-\alpha)/2$ we have that
\begin{equation*}
	\Pr_{\substack{(\x, \y) \\ \alpha+\varepsilon\text{-correlated}}}
	\left[\frac{1}{d}\sum_{i=1}^{d}(-1)^{\x_i}(-1)^{\y_i} \leq \alpha\right] \leq e^{-\varepsilon^{2}d/2}.
\end{equation*}
\end{lemma}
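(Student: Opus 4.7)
The plan is to rewrite the expression $\frac{1}{d}\sum_{i=1}^{d}(-1)^{\x_i}(-1)^{\y_i}$ as a normalized sum of independent, bounded random variables whose mean is exactly $\alpha + \varepsilon$, and then invoke Hoeffding's inequality in a single step.

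First I would define $X_i = (-1)^{\x_i}(-1)^{\y_i} = (-1)^{\x_i + \y_i} \in \{-1, +1\}$ for each $i \in \{1, \dots, d\}$. By the definition of $(\alpha+\varepsilon)$-correlation, $\x$ is uniform on $\cube{d}$ and $\y$ is obtained by independently flipping each coordinate with probability $(1 - \alpha - \varepsilon)/2$, so the events $\{\x_i = \y_i\}$ are mutually independent across $i$, which makes the $X_i$ mutually independent as well. A direct computation gives
\begin{equation*}
\Pr[X_i = 1] = \tfrac{1 + \alpha + \varepsilon}{2}, \qquad \Pr[X_i = -1] = \tfrac{1 - \alpha - \varepsilon}{2},
\end{equation*}
so that $\E[X_i] = \alpha + \varepsilon$. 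The hypothesis $0 < \varepsilon < (1-\alpha)/2$ merely guarantees that the flip probability lies in $(0, 1/2)$, so the coupling is well defined.

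Next I would recenter: the event $\frac{1}{d}\sum_i X_i \leq \alpha$ is exactly the event $\sum_i (X_i - \E X_i) \leq -\varepsilon d$. Since each $X_i$ is supported in the interval $[-1, 1]$ of width $2$, Hoeffding's inequality for bounded independent random variables gives
\begin{equation*}
\Pr\!\left[\sum_{i=1}^{d}(X_i - \E X_i) \leq -\varepsilon d\right] \leq \exp\!\left(-\frac{2 (\varepsilon d)^2}{\sum_{i=1}^{d} 2^2}\right) = \exp\!\left(-\frac{\varepsilon^2 d}{2}\right),
\end{equation*}
which is exactly the claimed bound. There is no real obstacle here; the only point to double-check is that the one-sided form of Hoeffding is being applied correctly to the lower tail of a sum of $[-1,1]$-valued variables, which is standard.
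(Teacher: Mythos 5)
Your proof is correct and matches the paper's (one-line) argument exactly: the paper simply states that the lemma follows directly from Hoeffding's inequality, and your decomposition into i.i.d. $\{-1,+1\}$-valued variables with mean $\alpha+\varepsilon$, followed by the one-sided Hoeffding bound with range width $2$, is precisely that application.
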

In the following derivation, assume that $\alpha, \varepsilon$ satisfies $0 < \varepsilon < (1-\alpha)/2$, 
let $\x, \y$ denote randomly $(\alpha + \varepsilon)$-correlated vectors in $\cube{d}$, 
and assume that $\alpha + \varepsilon \leq \alpha^{\lambda} \leq 1/(\alpha + \varepsilon)$, then
\begin{align*}
	&(|U|/2^{d})^\frac{1 + \alpha^{2\lambda} - 2 \alpha^{\lambda} (\alpha + \varepsilon)}{1 - (\alpha+\varepsilon) ^{2}} 
	\geq \Pr[\x \in Q, \y \in U] \\
	&\quad \geq \Pr[\x \in Q, \y \in U \mid \ip{\x}{\y} \geq \alpha] \Pr[\ip{\x}{\y} \geq \alpha] \\
	&\quad \geq  p_1 (1 - e^{-\varepsilon^{2} d/2})   
\end{align*}
Summarizing the bounds:
\begin{align*}
	p_1 &\leq \frac{(|U|/2^{d})^\frac{1 + \alpha^{2\lambda} - 2\alpha^{\lambda}(\alpha+\varepsilon)}{1-\alpha^{2}}}{1-e^{-\varepsilon^{2}d/2}} \\ 
	p_2 &\geq (|U|/2^{d})^{1 + \alpha^{2\lambda}}\\ 
	p_q &\geq |Q|/2^{d} \\
	p_u &\geq |U|/2^{d}.
\end{align*}

When minimizing $\rho_q$ we have that \mbox{$\log(p_{q}/p_{2}) = -\log(|U|/2^{d})$}.
Setting $\varepsilon = 2\sqrt{\ln(d) /d}$ results in 
$\log(1/p_1) \geq -\frac{1+\alpha^{2\lambda} - 2\alpha^{\lambda}(\alpha + \varepsilon)}{1-\alpha^{2}}\log(|U|/2^{d}) - O(1/d^{2})$.
Putting things together:
\begin{align*}
	\frac{\log(p_q / p_1)}{\log(p_q / p_2)} &\geq 
	-\frac{\alpha^{2\lambda} \log(|U|/2^{d})}{\log(|U|/2^d)} \\ 
	&\quad +  
	\frac{\frac{1+\alpha^{2\lambda} - 2\alpha^{\lambda}(\alpha + \varepsilon)}{1-\alpha^{2}}\log(|U|/2^{d})
	+ O(1/d^{2})}{\log(|U|/2^d)} \\
	&= \frac{(1 - \alpha^{1 + \lambda})^{2} -  2\alpha^{\lambda}\varepsilon}{1-\alpha^{2}} + \frac{ O(1/d^{2})}{\log(|U|/2^d)} \\
	&= \frac{(1 - \alpha^{1 + \lambda})^{2}}{1-\alpha^{2}} - O(\sqrt{\log(d)/d}).
\end{align*}
The derivation of the lower bound for $\rho_u$ is almost the same and the resulting expression is
\begin{equation}
	\frac{\log(p_u / p_1)}{\log( p_q / p_2)} \geq \frac{(\alpha^{\lambda} - \alpha)^{2}}{1-\alpha^{2}} - O(\sqrt{\log(d)/d}).
\end{equation}
\section{Comparison to Kapralov} \label{app:kapralov}
Kapralov uses $\alpha$ to denote a parameter controlling the space-time tradeoff 
for his solution to the $(r, cr)$-near neighbor problem in Euclidean space.
For every choice of tradeoff parameter $\alpha \in [0,1]$, 
assuming that \mbox{$c^2 \geq 3(1-\alpha)^2 - \alpha^2 + \varepsilon$} for arbitrarily small constant $\varepsilon > 0$, 
Kapralov \cite{kapralov2015} obtains query and update exponents
\begin{align}
	\rho_q &= \frac{4(1-\alpha)^2}{c^2 + (1-\alpha)^2 - 3\alpha^2}, \\
	\rho_u &= \frac{4\alpha^2}{c^2 + (1-\alpha)^2 - 3\alpha^2}.
\end{align}
We convert Kapralov's notation to our own by setting $\lambda = 1 - 2\alpha$.
To compare, Kapralov sets $\alpha = 0$ for near-linear space and we set $\lambda = 1$.
We want to write Kapralov's exponents on the form
\begin{equation}
	\rho_q = \frac{c^2(1+\lambda)^2}{(c^2 + \lambda)^2 + x}, \quad \rho_u = \frac{c^2(1-\lambda)^2}{(c^2 + \lambda)^2 + x}
\end{equation}
for some $x$ that we will proceed to derive.
We have that $(1-\alpha)^2 = (1+\lambda)^{2}/4$ and $\alpha^2 = (1-\lambda)^{2}/4$.
Multiplying the numerator and denominator in Kapralov's exponents by $c^2$ we can write Kapralov's exponents as
\begin{align}
	\rho_q &= \frac{c^2(1+\lambda)^2}{c^4 + c^2 (1+\lambda)^2 / 4 - 3c^2 (1-\lambda)^2 / 4}, \\
	\rho_u &= \frac{c^2(1-\lambda)^2}{c^4 + c^2 (1+\lambda)^2 / 4 - 3c^2 (1-\lambda)^2 / 4}.
\end{align}
We have that 
\begin{align*}
	x &= c^4 + c^2 (1+\lambda)^2 / 4 - 3c^2 (1-\lambda)^2 / 4 - (c^2 + \lambda)^2\\ 
	  &= -c^2(1+\lambda^2)/2 - \lambda^2. 
\end{align*}
For every choice of $\lambda \in [-1,1]$, and under the assumption that $c^2 \geq (1+\lambda)^2 / 2 + \lambda + \varepsilon$ 
for an arbitrarily small constant $\varepsilon > 0$, this allows us to write Kapralov's exponents as 
\begin{align}
	\rho_q &= \frac{c^2(1+\lambda)^2}{(c^2 + \lambda)^2 - c^2(1+\lambda^2)/2 - \lambda^2}, \\
	\rho_u &= \frac{c^2(1-\lambda)^2}{(c^2 + \lambda)^2 - c^2(1+\lambda^2)/2 - \lambda^2}.
\end{align}
To compare Kapralov's result against our own for search in $\ell_s$-spaces we consider the exponents from Theorem \ref{thm:lsvanilla},
ignoring additive $o(1)$ terms:
\begin{equation}
		\rho_q = \frac{c^s (1 + \lambda)^2}{(c^s + \lambda)^2}, \quad 
		\rho_u = \frac{c^s (1 - \lambda)^2}{(c^s + \lambda)^2}.		
\end{equation}
Setting $\lambda = 1$ we obtain a data structure that uses near-linear space and we get a query exponent $\rho_q = 16/25$ 
while Kapralov obtains an exponent of $\rho_q = 16/20$, ignoring $o(1)$ terms.
At the other end of the tradeoff, setting $\lambda = -1$, we get a data structure with query time $n^{o(1)}$ and update exponent $\rho_u = 16/9$ 
while Kapralov gets an update exponent of $\rho_u = 4$, again ignoring additive $o(1)$ terms.

The assumption made by Kapralov that $c^2 \geq (1+\lambda)^2 / 2 + \lambda + \varepsilon$ means that in the case of a 
near-linear space data structure ($\lambda = 1$) sublinear query time can only be obtained for $c > \sqrt{3}$.
In contrast, Theorem \ref{thm:lsvanilla} gives sublinear query time for every constant $c > 1$.
\section{Details about dynamization and the model of computation}
In order to obtain fully dynamic data structures we apply a powerful dynamization 
result of Overmars and Leeuwen \cite{overmars1981} for decomposable searching problems.
Their result allows us to turn a partially dynamic data structure into a fully dynamic data structure, 
supporting arbitrary sequences of queries and updates, at the cost of a constant factor in the space and running time guarantees. 
Suppose we have a partially dynamic data structure that solves the $(r, cr)$-near neighbor problem on a set of $n$ points.
By partially dynamic we mean that, after initialization on a set $P$ of $n$ points, 
the data structure supports $\Theta(n)$ updates without changing the query time by more than a constant factor.
Let $T_{q}(n)$, $T_{u}(n)$, and $T_{c}(n)$ denote the query time, update time, and construction time of such a data structure containing $n$ points.
Then, by Theorem 1 of Overmars and Leeuwen \cite{overmars1981}, there exists a fully dynamic version of the data structure 
with query time $O(T_{q}(n))$ and update time $O(T_{u}(n) + T_{c}(n)/n)$ that uses only a constant factor additional space.   
The data structures presented in this paper, as well as most related constructions from the literature, 
have the property that $T_{c}(n)/n = O(T_{u}(n))$, allowing us to go from a partially dynamic 
to a fully dynamic data structure ``for free'' in big O notation.  

In terms of guaranteeing that the query operation solves the $(r, cr)$-near neighbor problem on the set of points $P$ 
currently inserted into the data structure, we allow a constant failure probability $\delta < 1$, 
typically around $1/2$, and omit it from our statements.
We make the standard assumption that the adversary does not have knowledge of the randomness used by the data structure.
Say we have a data structure with constant failure probability and a bound on the expected space usage. 
Then, for every positive integer $T$ we can create a collection of $O(\log T)$ independent repetitions of the data structure such that
for every sequence of $T$ operations it holds with high probability in $T$ that the space usage will never 
exceed the expectation by more than a constant factor and no query will fail.

\subsection{Model of computation}
We use the standard word RAM model as defined by Hagerup \cite{hagerup1998} with a word size of $\Theta(\log n)$ bits. 
Unless otherwise stated, we make the assumption that a point in $(X, D)$ can be stored in $d$ words and 
that the dissimilarity between two arbitrary points can be computed in $d$ operations where $d$ is a positive integer 
that corresponds to the dimension in the various well-studied settings mentioned in the main text.
Furthermore, when describing framework-based solutions to the $(r, cr)$-near neighbor problem, 
we make the assumption that we can sample, evaluate, and represent elements from $\LSF$ and $\LSH$ 
with neglible error using space and time $dn^{o(1)}$.

Many of the LSH and LSF families rely on random samples from the standard normal distribution.
We will ignore potential problems resulting from rounding due to the fact that our model only supports finite precision arithmetic.
This approach is standard in the literature and can be justified by noting that the error introduced by rounding is neglible.
Furthermore, there exists small pseudorandom standard normal distributions that support sampling 
using only few uniformly distributed bits as noted by Charikar \cite{charikar2002}. 
In much of the related literature the model of computation is left unspecified and statements 
about the complexity of solutions to the $(r, cr)$-near neighbor problem are usually made with respect to particular operations 
such as the hash function computations, distance computations, etc., leaving out other details \cite{indyk1998, har-peled2012}.  
\bibliography{filters}
\bibliographystyle{plain}
\end{document}